\newcommand{\card}[1]{\left\vert{#1}\right\vert}
\newcommand{\comment}[1]{}
\definecolor{blue}{HTML}{1F77B4}
\definecolor{orange}{HTML}{FF7F0E}
\definecolor{green}{HTML}{2CA02C}
\pgfplotsset{compat=1.14}
\DeclareMathOperator{\rank}{rk}
\DeclareMathOperator{\PG}{PG}
\DeclareMathOperator{\weight}{wt}
\DeclareMathOperator{\dd}{d}
\DeclareMathOperator{\HH}{H}
\newcommand{\dH}{\dd_{\HH}}
\newcommand{\F}{\mathbb F}
\newcommand{\Fq}{\mathbb{F}_q}
\newcommand{\N}{\mathbb N}
\newcommand{\Z}{\mathbb Z}
\newcommand{\code}{C_2(\Uppi\sqcup\Upgamma)^\perp}
\newtheorem{theorem}{Theorem}[section]
\newtheorem{proposition}[theorem]{Proposition}
\newtheorem{lemma}[theorem]{Lemma}
\theoremstyle{definition}
\newtheorem{definition}[theorem]{Definition}
\newtheorem{example}[theorem]{Example}
\newtheorem{remark}[theorem]{Remark}
\begin{document}
\title{Moderate Density Parity-Check Codes from Projective Bundles}

\author[1,2]{Jessica Bariffi}
\affil[1]{Institute of Mathematics, University of Zurich, Switzerland} 
\affil[2]{Institute of Communication and Navigation, German Aerospace Center, Germany}

\author[3]{Sam Mattheus}
\affil[3]{Vrije Universiteit Brussel, Brussels, Belgium}

\author[4]{Alessandro Neri}
\affil[4]{Institute for Communications Engineering, TU Munich, Germany}

\author[1]{Joachim Rosenthal}




 

%
%

%
%

%
\maketitle              
\begin{abstract}
A new construction for moderate density parity-check (MDPC) codes using finite geometry is proposed. We design a parity-check matrix for this family of binary codes as the concatenation of two matrices: the incidence matrix between points and lines of the Desarguesian projective plane and the incidence matrix between points and ovals of a projective bundle. A projective bundle is a special collection of ovals which pairwise meet in a unique point.  We determine  minimum distance and dimension of these codes, showing that they have a  natural quasi-cyclic structure. In addition, we analyze the error-correction performance within one round of a modification of Gallager's bit-flipping decoding algorithm. In this setting, our codes have the best possible error-correction performance for this range of parameters. 

\end{abstract}

\section{Introduction}

The close interplay between coding theory and finite geometry has emerged multiple times in the last 60 years, starting from the works of Prange \cite{Prange} and Rudolph \cite{Rudolph}, where they proposed to construct linear codes starting from projective planes. Their idea was to use the incidence matrix of the plane as a generator matrix or as a parity-check matrix of a linear code, showing that the underlying geometry can be translated in metric properties of the corresponding codes. Generalizations of these constructions have been studied since the 70's and are still subject of active research (see \cite{Assmus}). The relations between these two research areas had also a strong impact in the opposite direction. The most striking example is certainly the non-existence proof of a finite projective plane of order $10$ shown in \cite{lam1989non}. This groundbreaking result came -- with the help of a computer -- after that a series of papers analyzed  the binary linear code coming from a putative projective plane of order $10$. 

A very important class of codes which was sensibly influenced by  geometric constructions is given by low-density parity-check (LDPC) codes, which were introduced by Gallager in
his 1962 seminal paper~\cite{Gal}. LDPC codes, as originally proposed,
are binary linear codes with a very sparse parity-check
matrix. This sparsity property is the bedrock of efficient decoding algorithms.   Already Gallager provided two of such algorithms
 whose decoding complexity is linear in the block length. 
However, 
LDPC codes came to fame much later, when in 2001 Richardson, Shokrollahi and
Urbanke~\cite{ri01a2} were able to show that LDPC codes are capable to
approach Shannon capacity in a practical manner. Above authors derived
this result using random constructions of very large and sparse parity-check matrices. Because of these random constructions the performance
of the codes was only guaranteed with high probability and there was
also the practical disadvantage that the storage of a particular
parity-check matrix required a lot of storage space.

There are several design parameters one wants to optimize when
constructing LDPC codes. On the side of guaranteeing that the distance
is reasonably large, it was realized early that it is desirable that
the girth of the associated Tanner graph is large as well. This last
property helps to avoid decoding failures in many decoding algorithms.
Thus, in order to guarantee that an LDPC code had desirable design
parameters, such as a large distance or a large girth of the associated
Tanner graph, some explicit constructions were needed. Already in 1982
Margulis~\cite{Margulis} used group theoretic methods to construct a
bipartite Cayley graph whose girth was large. This line of research
was extended by Rosenthal and Vontobel~\cite{ro00p} using some
explicit constructions of Ramanujan graphs, which have exceptional
large girth.

Maybe the first time objects from finite geometry were used to
construct explicitly some good LDPC codes was in the work of Kou, Lin
and Fossorier~\cite{ko01}. These authors gave four different
constructions using affine and projective geometries over finite
fields which did guarantee that the resulting code had a good distance
and the associated Tanner graph had a girth of at least 6.
Using points and lines in $\mathbb{F}_q^m$ Kim, Peled, Perepelitsa,
Pless and Friedland \cite{Kim04} came up with incidence matrices representing
excellent LDPC codes. In the last 15 years there has been active
research to come up with further explicit constructions of LDPC codes with desirable parameters based on combinatorial structures \cite{johnson2004low,ko01,liu2005ldpc,vontobel2001construction,Vandendriessche2010}.

Moderate-density parity-check (MDPC) codes were first introduced by Ouzan
and Be'ery~\cite{ou09u}. Misoczki, Tillich, Sendrier and
Barreto~\cite{mi13} showed that MDPC codes could still be decoded with
low complexity as long as the row-weight of each row vector of the
parity-check matrix was not much more than the square root of the
length of the code. These authors also showed that MDPC codes are
highly interesting for the use in the area of code based cryptography.
Similar as for LDPC codes, it is an important task to come up with explicit
constructions of MDPC codes where e.g. a good minimum distance can be
guaranteed. Already Ouzan and Be'ery \cite{ou09u} provided a
construction using cyclotomic cosets. Further constructions using
quasi-cyclic codes can be found in~\cite{Janoska,mi13}.

\medskip
This paper adds another  dowel to the theory of error-correcting codes arising from geometric objects. We propose a new construction of linear codes using projective bundles in a Desarguesian projective plane, resulting in  a family of MDPC codes. Concretely, a projective bundle in a projective plane of order $q$ is a collection of $q^2+q+1$ ovals which mutually intersect in a unique point. We consider the incidence structure consisting of the lines of a projective plane together with the ovals of a projective bundle. The incidence matrix of this  structure will serve as a parity-check matrix of the proposed binary codes. We completely determine their dimension and minimum distance for both $q$ even and odd. In addition, we observe that we can design these codes to possess a quasi-cyclic structure of index $2$. As a consequence, their encoding can be
achieved in linear time and implemented with  linear feedback shift
registers. Moreover, also the storage space required is only half their length. 

The main motivation arises from \cite{Tillich}, where the error-correction capability of  the bit-flipping decoding algorithm on the parity-check matrix of an MDPC code was analyzed. There, it was derived that its performance is inversely proportional to the maximum column intersection of the parity-check matrix, which is the maximum number of positions of ones that two distinct columns share. We show indeed that the maximum column intersection of the derived parity-check matrices is the smallest possible for the chosen parameters, implying in turn the best possible performance of the bit-flipping algorithm.

\medskip

The paper is organized as follows: Section \ref{sec:coding} consists of the coding theory background needed in the paper. In particular, we introduce the family of MDPC codes and we recall the result on the performance of the bit-flipping algorithm presented in \cite{Tillich}, which was decisive for the idea of this construction. In Section \ref{sec:planes} we give a brief overview on projective planes, studying the basic properties of codes arising from them.
 Section \ref{section_construction} is dedicated to  the  new proposed MDPC design using projective bundles. Here, we study some of the code properties and we determine its dimension and  minimum distance. The paper is based on the master's thesis of the first author \cite{Bariffi} and in this section we extend the results which were originally stated there. Finally, the goal of Section \ref{section:generalization} is to generalize the results stated in Section \ref{section_construction} in order to have more flexibility in the choice of the parameters. This is done by using several projective bundles instead of only one.

\section{Coding Theory and Moderate Density Parity-Check Codes}\label{sec:coding}

Let us start by briefly recalling  some basics of coding theory. Throughout the paper $q$ will always be a prime power, and we will denote  the finite field with $q$ elements by $\mathbb{F}_q$. The set of vectors of length $n$ over  $\mathbb{F}_q$ will be denoted by $\mathbb{F}_q^n$. 

We consider the \emph{Hamming weight} on $\Fq^n$  defined as 
$$\weight(v):=\card{\lbrace i \in \{1,\ldots,n\} \mid v_i\neq 0\rbrace}.$$
It is well-known that it induces a metric, namely the \emph{Hamming distance} which is given by
$$ \begin{array}{rccl}
\dH: & \Fq^n \times \Fq^n & \longrightarrow & \N \\
& (u,v) & \longmapsto & \weight(u-v).
\end{array}$$

\begin{definition}
    A \emph{$q$-ary linear code} $C$ of length $n$ and dimension $\dim (C) = k$  is a $k$-dimensional linear subspace of $\mathbb{F}_q^n$ endowed with the Hamming metric. The \emph{minimum distance} of $C$ is the minimum among all the possible weights of the non-zero codewords and it is denoted by $\dd(C)$, i.e.
     \begin{align*}
         \dd(C) := \min \lbrace \weight (c) \, | \, c \in C, \, c \not = 0 \rbrace .
     \end{align*}
\end{definition}
A $q$-ary linear code of length $n$ and dimension $k$ will be denoted for brevity by $[n,k]_q$ code,  or by  $[n,k,d]_q$ code if the minimum distance $d$ is known.

Any $[n,k]_q$ code $C$ has a dual code which is defined as 
    \begin{align*}
        C^\perp = \lbrace x \in \mathbb{F}_q^n \, | \, x\cdot c^\top = 0, \, \forall c \in C \rbrace .
    \end{align*}

 A \emph{generator matrix} of an $[n,k]_q$ code $C$ is a matrix  $G\in\Fq^{k\times n}$ whose rows form a basis of $C$. A generator matrix $H \in \mathbb{F}_q^{(n-k)\times n}$ for the dual code $C^\perp$ is called a \textit{parity-check matrix} of $C$. Note that $C$ can also be represented by a parity-check matrix $H$, since it corresponds to its right kernel, i.e.
\begin{align*}
    C = \ker (H) = \lbrace c \in \mathbb{F}_q^n \, | \, c\cdot H^\top = 0 \rbrace .
\end{align*}

A  matrix $A\in\Fq^{r\times s}$ is said to have \textit{row-weight} $w$, for some nonnegative integer $w$, if every row of $A$ has Hamming weight equal to $w$. Similarly, we say that $A$ has \emph{column-weight} $v$, if each of its columns has Hamming weight $v$.\\

In the following we will focus on the family of
moderate density parity-check (MDPC) codes. They are an extension of the well-known low density parity-check (LDPC) codes, 
and they are defined by the row-weight of a parity-check matrix. The terminology was first introduced in \cite{ou09u}, and then these codes were reintroduced and further generalized in \cite{mi13} for cryptographic purposes.

\begin{definition}
    Let $\{C_i\}$ be a family of binary linear codes of length $n_i$ with parity-check matrix $H_i$. If $H_i$ has row weight $\mathcal{O}(\sqrt{n_i})$, $\{C_i\}$ is called a (family of) \textit{moderate density parity-check (MDPC)} code. If, in addition, the weight of every column of $H_i$ is a constant $v_i$ and the weight of every row of the $H_i$ is a constant $w_i$ we say the MDPC code is of type $(v_i, w_i)$.
\end{definition}

MDPC codes have been constructed in various ways. In their seminal paper \cite{ou09u},  Ouzan and  Be'ery designed cyclic MDPC codes carefully choosing the idempotent generator of the dual code. This structure has been generalized in order to design quasi-cyclic MDPC codes (see e.g. \cite{Janoska,mi13}). 
A different approach has been proposed in \cite{Tillich}, where a random model is considered.

In the definition of an MDPC code the chosen parity-check matrix is very important. Indeed, as for LDPC codes, an MDPC code automatically comes together with a decoding algorithm -- for instance the bit-flipping algorithm -- whose performance depends on the chosen parity-check matrix.
Thus, in order to study the error-correction performance, we introduce the following quantity.

\begin{definition}
    Let $H$ be a binary matrix. The \textit{maximum column intersection} is the maximal cardinality of the intersection of the supports of any pair of distinct columns of $H$.
\end{definition}

The following result was found by Tillich in 2018 (for more details and the proof see \cite{Tillich}). It states the amount of errors that can be corrected within one round of the bit-flipping decoding algorithm.

\begin{theorem}\label{Tillich_correction}
    Let $C$ be an MDPC code of type $(v, w)$ with parity-check matrix $H$. Let $s_H$ denote the maximum column intersection of $H$. Performing one round of the bit-flipping decoding algorithm with respect to $H$, we can correct all errors of weight at most $\lfloor \frac{v}{2 \cdot s_H}\rfloor$.
\end{theorem}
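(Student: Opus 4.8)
The plan is to analyze a single round of bit-flipping directly from the definitions of the syndrome and the flipping rule. Let $c \in C$ be the transmitted codeword and suppose the received word is $y = c + e$ with $\weight(e) = t \le \lfloor v/(2 s_H) \rfloor$. The syndrome is $\sigma = y H^\top = e H^\top$, and each bit of the decoder's decision is governed by the number of unsatisfied parity checks incident to that coordinate. For a coordinate $j$, write $f_j$ for the number of rows of $H$ (i.e.\ parity checks) that involve position $j$ and are currently unsatisfied; since every column of $H$ has weight exactly $v$, position $j$ is involved in exactly $v$ checks. One round of bit-flipping flips exactly those coordinates $j$ for which $f_j$ exceeds a threshold (typically $v/2$); the claim is that with $t$ so small, $f_j$ is large precisely on the $t$ error positions and small everywhere else, so exactly the error coordinates get flipped and $e$ is corrected.

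First I would estimate $f_j$ for an error position $j \in \mathrm{supp}(e)$ from below. A check incident to $j$ is satisfied iff it contains an even number of error positions; it is certainly unsatisfied if it contains $j$ and no other error position. The number of checks through $j$ that also contain some other error position $j' \in \mathrm{supp}(e) \setminus \{j\}$ is, by definition of the maximum column intersection, at most $s_H$ for each such $j'$, hence at most $(t-1)\, s_H$ in total. Therefore the number of checks through $j$ containing $j$ as their unique error position is at least $v - (t-1)s_H$, and all of these are unsatisfied, giving $f_j \ge v - (t-1)s_H$. Second, for a non-error position $j \notin \mathrm{supp}(e)$, a check through $j$ can only be unsatisfied if it contains at least one error position; each error position $j' \in \mathrm{supp}(e)$ lies in at most $s_H$ common checks with $j$, so $f_j \le t\, s_H$.

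It remains to check that the lower bound on error positions strictly exceeds the upper bound on non-error positions, which separates the two populations and lets a threshold at (say) $v/2$ flip exactly the error bits. From $t \le \lfloor v/(2s_H)\rfloor \le v/(2 s_H)$ we get $t\, s_H \le v/2$, so for non-error positions $f_j \le v/2$; and $v - (t-1)s_H = v - t s_H + s_H \ge v/2 + s_H > v/2$ for error positions. Hence choosing the flipping threshold to be $v/2$ (flip $j$ when $f_j > v/2$, equivalently $f_j \ge \lfloor v/2\rfloor + 1$ in the integer setting, with the boundary case handled by the strict inequalities just obtained) flips all $t$ error coordinates and no others, so after one round the decoder outputs $y - e = c$. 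The main obstacle — really the only delicate point — is bookkeeping the inequality at the boundary: making sure the threshold comparison is stated so that the at-most-$v/2$ estimate on clean positions does not trigger a flip while the strictly-larger estimate on error positions does, which is exactly where the floor in $\lfloor v/(2 s_H)\rfloor$ is used; everything else is the two counting bounds above.
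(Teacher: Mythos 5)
Your argument is correct, and since the paper does not reprove this theorem but defers to the cited reference \cite{Tillich}, the relevant comparison is with that source: your two counting bounds (at least $v-(t-1)s_H$ unsatisfied checks through an error position, at most $t\,s_H$ through a clean one, separated by the majority threshold $v/2$) are exactly the standard argument given there. The only point worth stating explicitly in a final write-up is the threshold convention (flip iff strictly more than $v/2$ checks are unsatisfied), which you already flag as the place where the floor in $\lfloor v/(2s_H)\rfloor$ is needed.
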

\noindent It hence follows that, the smaller $s_H$, the more errors can be corrected after one round of the bit-flipping decoding algorithm. A random construction would yield an asymptotic value for $s_H$. We would like to design MDPC codes in such a way that $s_H$ is as small as possible and, more importantly, that $s_H$ is deterministic.

\section{MDPC codes from Projective Planes}\label{sec:planes}

The projective plane $\PG(2,q)$ is a point-line geometry constructed from a three-dimensional vector space $V$ over $\Fq$. Its points and lines are the one- and two-dimensional subspaces of $V$ respectively and the containment relation in $V$ defines the incidence relation of the plane. It has $q^2+q+1$ points and equally many lines. The geometry satisfies the following properties:

\begin{enumerate}
        \item any two distinct points are incident with exactly one common line;
        \item any two distinct lines are incident with exactly one common point;
        \item there are four points such that no three of them are collinear.
    \end{enumerate}
    
This means that $\PG(2,q)$ can also be regarded as a symmetric $2$-$(q^2+q+1,q+1,1)$-design, where the lines correspond to the blocks.
Moreover, every line in $\PG(2,q)$ is incident with $q+1$ points and dually, every point is incident with $q+1$ lines.
One way to represent $\PG(2,q)$ is by an incidence matrix. This is a matrix $A$ whose rows and columns are indexed by points and lines respectively such that

\begin{align*}
    (A)_{p\ell} = \begin{cases}
               1 \text{ if } p \text{ is incident with } \ell \\
               0 \text{ otherwise.} 
            \end{cases}
\end{align*}

Here we describe an alternative way to represent the projective plane $\PG(2,q)$. We can identify the set of points with the integers modulo $q^2+q+1$. For the description of the lines, we will follow the instruction presented by Hirschfeld in \cite[p77---p79]{Hirsch}. Let us therefore introduce the following set. 

\begin{definition}\label{def:perfect_diff_set}
A set $D=\{d_0, \dots , d_r\}\subseteq \mathbb Z/(r^2+r+1)\mathbb Z$  is called a \textit{perfect difference set}, if all differences $(d_i - d_j)$ are distinct modulo $r^2+r+1$, for $i, j \in \{0, \dots, r\}$.
\end{definition}

\begin{example}
For instance, consider $r=2$. One can show that the set $D = \{0, 1, 3\}$ of $r+1 = 3$ integers is indeed a perfect difference set, since any two differences between two distinct elements are pairwise disjoint modulo $r^2+r+1 = 7$.
\end{example}

Hirschfeld showed in \cite[Theorem 4.2.2 and its Corollary]{Hirsch} that the set of lines of $\PG (2, q)$ is fully described by the circulant shifts modulo $q^2+q+1$ of a perfect difference set of $q+1$ elements. In this way we obtain a circulant incidence matrix in which the support of the first column is $D$.\\
In order to illustrate this, consider the Fano plane $\PG (2, 2)$ consisting of seven points and lines. We have seen, that the points will be identified with the integers modulo $q^2+q+1 = 7$. For the set of lines we will use the cyclic shifts (modulo 7) of the set $D = \{ 0, 1, 3 \}$, which we have seen is in fact a perfect difference set. Explicitly, we obtain the following set of points $\mathcal{P}$ and set of lines $\mathcal{L}$
\begin{align*}
    \mathcal{P} &= \{ 0, 1, 2, 3, 4, 5, 6 \}, \\
    \mathcal{L} &= \lbrace \{ 0+i, 1+i, 3+i \} \, | \, i \in \{0, \ldots 6 \} \rbrace.
\end{align*}

The defining properties of projective planes have made them a good source of error-correcting codes by taking their incidence matrices as the parity-check matrix, as was done already in the late 1950s, cf. \cite{Prange} or \cite{Rudolph}.
\begin{definition}
Let $H$ be an incidence matrix of $\Uppi = \PG(2,q)$ over the binary finite field $\mathbb{F}_2$. We define the code in $\F_2^{q^2+q+1}$
\begin{align*}
    C_2(\Uppi)^{\perp} =\ker (H).
\end{align*}
\end{definition}

Codes from planes have been intensively studied and many properties have been derived thanks to the underlying geometric structure. Among the most relevant properties, Graham and MacWilliams \cite{Graham} completely determined the dimension of the codes $C_p(\Uppi)^\perp$ over $\F_p$ and their minimum distance when $p=2$ was determined by Assmus and Key \cite{Assmus}. Here we state the two results, restricting ourselves only to the case $p=2$.

\begin{theorem}\label{thm:parameters_code_plane}
    The code $C_2(\Uppi)^\perp$ is a $[q^2+q+1,k,d]_2$ code, where
    $$ (k,d)=\begin{cases} (1,q^2+q+1) & \mbox{ if  } q \mbox{ is odd }, \\
    (2^{2h}-3^h+2^h, 2^h+2) & \mbox{ if } q=2^h.  
    \end{cases}$$
\end{theorem}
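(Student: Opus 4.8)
The plan is to obtain the dimension as $k=(q^2+q+1)-\rank_2(H)$, treating $q$ odd and $q=2^h$ separately since the $2$-rank of the incidence matrix of $\PG(2,q)$ behaves completely differently in the two cases, and to obtain the minimum distance from the geometric description of the code: it is standard, and follows from the self-duality of $\PG(2,q)$, that $C_2(\Uppi)^\perp$ consists of the characteristic vectors of those point sets $S$ that meet every line of $\Uppi$ in an even number of points. I expect the only genuinely hard input to be the rank computation in the even case.

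\emph{The case $q$ odd.} Here $\rank_2(H)$ can be pinned down directly. Over $\Z$ one has $HH^\top=qI+J$, because two distinct lines (dually, points) are incident with exactly one common point (line) and each line carries $q+1$ points; reducing modulo $2$ and using that $q$ is odd gives $HH^\top\equiv I+J\pmod 2$. This matrix has $\F_2$-rank $(q^2+q+1)-1$: its kernel is spanned by the all-ones vector, since $q^2+q+1$ is odd. Hence $\rank_2(H)\ge q^2+q$. On the other hand every row of $H$ has weight $q+1\equiv 0\pmod 2$, so the all-ones vector lies in $\ker(H)$ and $\rank_2(H)\le q^2+q$. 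Therefore $\rank_2(H)=q^2+q$, the code has dimension $k=1$, and it is spanned by the all-ones vector, whose weight is $q^2+q+1$. This settles the first line of the table.

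\emph{The case $q=2^h$: dimension.} The crucial input is now the $2$-rank of the point--line incidence matrix of $\PG(2,q)$, which is the content of Hamada's rank formula: for the plane over $\F_{p^h}$ it gives $\rank_p(H)=\binom{p+1}{2}^{h}+1$, and for $p=2$ this equals $3^h+1$. One route to this formula is to decompose the permutation module $\F_2[\PG(2,q)]$ under $\mathrm{PGL}(3,q)$ and identify the submodule generated by the lines; a more hands-on route exploits that $C_2(\Uppi)$ is a cyclic code of length $q^2+q+1$ (via a Singer cycle, as in the perfect difference set description recalled above) and computes the degree of its generator polynomial by determining, through Lucas' theorem on binomial coefficients modulo $2$, exactly which $(q^2+q+1)$-th roots of unity are roots. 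Either way $\rank_2(H)=3^h+1$, whence $k=(q^2+q+1)-(3^h+1)=2^{2h}+2^h-3^h$. This is the step I expect to be the main obstacle: the elementary $HH^\top$ argument that worked for $q$ odd gives nothing sharp here, so one really needs the representation theory of $\mathrm{PGL}(3,q)$ or the cyclotomic $p$-adic combinatorics behind Hamada's formula.

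\emph{The case $q=2^h$: minimum distance.} For the lower bound, let $c\neq 0$ be a codeword and $S$ the corresponding point set, so $|S\cap\ell|$ is even for every line $\ell$. Fix $P\in S$. Each of the $q+1$ lines through $P$ contains $P$ and meets $S$ evenly, hence in at least $2$ points; since two lines through $P$ share only $P$, the sets $(S\cap\ell)\setminus\{P\}$, as $\ell$ ranges over the pencil through $P$, are pairwise disjoint, nonempty, and cover $S\setminus\{P\}$. Thus $\weight(c)=|S|\ge q+2$. For the upper bound, take a hyperoval $\mathcal{O}$ of $\PG(2,q)$ --- for instance a conic together with its nucleus, which exists because $q$ is even --- so that $|\mathcal{O}|=q+2$ and, again since $q$ is even, every line meets $\mathcal{O}$ in $0$ or $2$ points. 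Its characteristic vector therefore lies in $\ker(H)=C_2(\Uppi)^\perp$ and has weight $q+2=2^h+2$. Combining the two bounds yields $d=2^h+2$, completing the table.
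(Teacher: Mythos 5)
Your proposal is correct. It is worth pointing out that the paper does not actually prove this theorem: it attributes the dimension to Graham--MacWilliams and the minimum distance to Assmus--Key, and only remarks that the odd case ``follows from'' the identity $AA^\top = A^\top A = qI+J$. Your odd-case argument is precisely that remark carried out in full (the mod-$2$ reduction $I+J$ has corank $1$ because $q^2+q+1$ is odd, and the all-ones vector lies in the kernel because $q+1$ is even), and it correctly identifies the code as the span of the all-ones vector. For $q=2^h$ your treatment matches the level of rigor of the paper: the $2$-rank $3^h+1$ is the one genuinely deep input, and you cite it (Hamada's formula / the cyclic-code computation) exactly as the paper cites Graham--MacWilliams; everything else you supply yourself. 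Your minimum-distance argument --- the pencil count through a point of the support giving $\weight(c)\ge q+2$, and a hyperoval (conic plus nucleus) attaining it --- is the standard Assmus--Key argument and is complete, modulo the self-duality remark needed to pass between the ``even sets of points'' and ``even sets of lines'' descriptions of $\ker(H)$, which you correctly flag. In short: where the paper gives a citation, you give a proof, and the one step you do not prove is the same step the paper also outsources.
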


The first part just follows from the observation that if $A$ is the incidence matrix of a projective plane of order $q$, then by definition

\[A^\top \!A = AA^\top = qI + J,\]
where $I$ is the identity matrix and $J$ the all-one matrix of size $q^2+q+1$.

From Theorem \ref{thm:parameters_code_plane} we can see that binary codes from $\PG(2,q)$ are only interesting whenever $q$ is even. Moreover, one can see that the incidence matrix of $\Uppi$ has constant row and column weight equal to $q+1$ which is $\mathcal O(\sqrt{q^2+q+1})$. Hence, codes from projective planes are very special examples of MDPC codes. With the aid of Theorem \ref{Tillich_correction}, we can show that one round of the bit-flipping algorithm on these codes permits to decode up to half the minimum distance with no failure probability, for any projective plane.

\begin{theorem}\label{thm:correction_capability_plane_codes}
       Let $\Uppi$ be a projective plane of even order and $H$ its incidence matrix, which is the parity-check matrix of the code $C_2(\Uppi)^\perp$.  After performing one round of bit-flipping on $H$ we can correct any error of weight up to $\lfloor\frac{d-1}{2}\rfloor$, where $d$ is the minimum distance of $C_2(\Uppi)^\perp$.
\end{theorem}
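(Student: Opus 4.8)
The plan is to derive the statement directly from the two theorems already available: Tillich's one-round bound (Theorem~\ref{Tillich_correction}) and the parameters of $C_2(\Uppi)^\perp$ (Theorem~\ref{thm:parameters_code_plane}). The only work is to identify the two quantities that feed into Theorem~\ref{Tillich_correction} for the incidence matrix $H$ of $\Uppi$, namely the column weight $v$ and the maximum column intersection $s_H$, and then to check that the resulting correction radius coincides with $\lfloor\frac{d-1}{2}\rfloor$.

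First I would record that $H$ makes $C_2(\Uppi)^\perp$ an MDPC code of type $(v,w)$ with $v=w=q+1$: every line of $\PG(2,q)$ is incident with exactly $q+1$ points and, dually, every point lies on exactly $q+1$ lines, so $H$ has constant row and column weight $q+1=\mathcal{O}(\sqrt{q^2+q+1})$. In particular the relevant column weight is $v=q+1$.

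The key step is the computation of $s_H$. The columns of $H$ are indexed by the lines of $\Uppi$, and the support of a column is precisely the set of points incident with that line. By the defining property that any two distinct lines of a projective plane meet in exactly one point, the supports of any two distinct columns of $H$ intersect in exactly one position, so $s_H=1$. This is the smallest value $s_H$ can take for a matrix with more than one column, and it is where the geometry enters.

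Applying Theorem~\ref{Tillich_correction} with $v=q+1$ and $s_H=1$ shows that one round of bit-flipping on $H$ corrects every error of weight at most $\lfloor\frac{q+1}{2}\rfloor$, which equals $\frac{q}{2}$ because $q=2^h$ is even. On the other hand, by the even-order case of Theorem~\ref{thm:parameters_code_plane} the minimum distance of $C_2(\Uppi)^\perp$ is $d=2^h+2=q+2$, hence $\lfloor\frac{d-1}{2}\rfloor=\lfloor\frac{q+1}{2}\rfloor=\frac{q}{2}$ as well. The two radii agree, which proves the claim. I do not expect any genuine obstacle here: the substantive input is entirely imported from the cited results, and the only point requiring a little care is the parity bookkeeping, i.e.\ that $d-1=q+1$ is odd while $q$ is even, so that both floor expressions collapse to $q/2$.
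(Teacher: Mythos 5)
Your proposal is correct and follows essentially the same route as the paper: identify $H$ as a matrix of type $(q+1,q+1)$, use the design property that two distinct lines meet in exactly one point to get $s_H=1$, and apply Theorem~\ref{Tillich_correction}. The only difference is that you spell out the final arithmetic $\lfloor\frac{q+1}{2}\rfloor=\lfloor\frac{d-1}{2}\rfloor$ via $d=q+2$ from Theorem~\ref{thm:parameters_code_plane}, a step the paper leaves implicit.
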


\begin{proof}
     Since a projective plane is in particular a symmetric $2$-$(q^2+q+1,q+1,1)$-design, then the maximum column intersection of $H$ is $1$. Moreover, the matrix $H$ is of type $(q+1,q+1)$. Hence, applying Theorem \ref{Tillich_correction}, we obtain that one round of the bit-flipping algorithm corrects every error of weight at most $\lfloor\frac{d-1}{2}\rfloor$.
\end{proof}

Theorem \ref{thm:correction_capability_plane_codes} shows that codes from planes are really powerful, and have the best performance according to Theorem \ref{Tillich_correction}, for a given matrix  of type $(q+1,q+1)$ and size $(q^2+q+1)\times(q^2+q+1)$. However, we can only construct codes from projective planes of even order, resulting in  $[2^{2h}+2^h+1,2^{2h}-3^h+2^h, 2^h+2]_2$ codes. 
This lack of choice of the parameters motivated many variation on this construction. In the last 50 years, many codes have been constructed based on underlying geometric objects: Euclidean and projective geometries over finite fields \cite{delsarte1969geometric,tang2005codes,ko01}, linear representation of Desarguesian projective planes \cite{pepe2009small}, (semi-)partial geometries \cite{johnson2004codes,Vandendriessche2010}, generalized quadrangles \cite{vontobel2001construction,kim2007small}, generalized polygons \cite{liu2005ldpc}, Ramanujan graphs \cite{Margulis,ro00p}, $q$-regular bipartite graphs from point line geometries \cite{Kim04}  and other incidence structures coming from combinatorial designs \cite{johnson2001regular,johnson2001construction,weller2003regular,johnson2004low}.

For the same reason, we propose a new construction of (families of) MDPC codes based on a suitable system of conics in a Desarguesian projective plane that behaves itself like a projective plane. This is encapsulated in the concept of \emph{projective bundles}, which we define in the following section.

\section{MDPC codes from Projective Bundles}\label{section_construction}

In this section we present the new MDPC codes using projective bundles by constructing its parity-check matrix. We start off by introducing the relevant geometrical objects, which are ovals and projective bundles in $\PG(2,q)$.

\begin{definition} 
	An oval in $\PG(2,q)$ is a set of $q+1$ points, such that every line intersects it in at most two points.
\end{definition}

The classical example of an oval is a non-degenerate conic, i.e. the locus of an irreducible homogeneous quadratic equation. When $q$ is odd, Segre's seminal result \cite{Segre2} shows that the converse is also true: every oval is a conic.

\begin{definition}
    A line in $\PG(2,q)$ is \textit{skew}, \textit{tangent} or \textit{secant} to a given oval if it intersects it in zero, one or two points respectively.
\end{definition}

We recall some properties of ovals which were first recorded by Qvist \cite{Qvist}. We include the proof as it will be relevant later.

\begin{lemma}\label{lem:ovaltangents}
    An oval in $\PG(2,q)$ has $q+1$ tangent lines, one in each point. 
    \begin{itemize} 
        \item If $q$ is odd, every point not on the oval is incident with zero or two tangent lines.
        \item If $q$ is even, then all tangent lines are concurrent.
    \end{itemize}
\end{lemma}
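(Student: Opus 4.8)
The plan is to count incidences carefully and then exploit a parity argument. Let $\mathcal{O}$ be an oval in $\PG(2,q)$, fix a point $P \in \mathcal{O}$, and consider the $q+1$ lines through $P$. Since every line meets $\mathcal{O}$ in at most two points and the remaining $q$ points of $\mathcal{O}$ must all be covered by these $q+1$ lines through $P$, a pigeonhole/double-counting argument forces exactly $q$ of these lines to be secant (each absorbing exactly one further point of $\mathcal{O}$) and exactly one to be tangent at $P$. This establishes that there is precisely one tangent line in each point of $\mathcal{O}$, hence $q+1$ tangent lines in total; it also shows there are $\binom{q+1}{2}$ secant lines and the rest of the $q^2+q+1$ lines are skew.

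Next I would analyze the tangents through an external point $R \notin \mathcal{O}$. Consider the $q+1$ lines through $R$. Each is skew, tangent, or secant to $\mathcal{O}$; if $t$ is the number of tangents and $s$ the number of secants among them, then counting the points of $\mathcal{O}$ on these lines gives $2s + t = q+1$, since every point of $\mathcal{O}$ lies on exactly one line through $R$ (two distinct points of $\mathcal{O}$ determine a secant, which meets the pencil at $R$ in a unique line). Therefore $t \equiv q+1 \pmod 2$. When $q$ is odd, $q+1$ is even, so $t$ is even; since an external point can have at most... well, one shows $t \le 2$ by noting two tangent lines through $R$ already account for two points of $\mathcal{O}$ and the constraint $2s+t=q+1$ with $t$ large would be fine numerically, so the sharper bound $t\in\{0,2\}$ needs the parity plus the observation that $t=q+1$ would force $s=0$, i.e. all lines through $R$ tangent — impossible for $q>1$ since then $|\mathcal{O}|=q+1<$ number of tangents counted with the point $R$; more cleanly, $t$ even together with a direct argument that three concurrent tangents cannot occur in odd characteristic (this is where the characteristic enters). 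For $q$ even, $q+1$ is odd, so $t$ is odd and in particular $t \ge 1$: every external point lies on at least one tangent.

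The heart of the even case — showing all $q+1$ tangents are concurrent — is the main obstacle. The strategy is: for $q$ even, every point of the plane lies on an odd number of tangents (just shown for external points; and points of $\mathcal{O}$ lie on exactly one). Summing incidences between points and the $q+1$ tangent lines: each tangent has $q+1$ points, so the total incidence count is $(q+1)^2$, which is odd. If no point lay on more than one... we instead argue that if two tangents $\ell_1, \ell_2$ meet at a point $N$ (which must be external), then by the parity result $N$ lies on a third tangent, and iterating, $N$ lies on all $q+1$ tangents; then $N$ is forced to be the common point, called the \emph{nucleus}. Making the iteration rigorous requires showing that once $N$ is on two tangents it is on all of them — I would do this by taking any point $P\in\mathcal{O}$ and showing the line $NP$ is tangent, using that the $q$ secants and $1$ tangent through $P$ partition $\mathcal{O}\setminus\{P\}$ into pairs via the involution "$X \mapsto$ second point of line $PX$", and that the pencil of lines through $N$ restricted to $\mathcal{O}$ must then also respect a pairing whose only fixed behaviour is the tangent at $P$; combined with $t$ odd at $N$, forcing $t=q+1$. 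This concurrency-forcing step, genuinely using $\character(\Fq)=2$, is where the real work lies; everything else is bookkeeping with the identity $2s+t=q+1$.
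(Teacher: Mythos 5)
There is a genuine gap here --- in fact the same gap occurs twice, once in each bullet. Your incidence count $2s+t=q+1$ for the pencil through an external point correctly determines the \emph{parity} of the number $t$ of tangents through that point, but parity only bounds $t$ from below, never from above. You recognize this in the odd case and defer to ``a direct argument that three concurrent tangents cannot occur in odd characteristic,'' which you never supply --- and which, contrary to your suggestion, does not require any algebraic input from the characteristic: it is pure counting. Likewise, in the even case the concurrency of the tangents is exactly the hard step, and your ``involution/pairing'' sketch is not a proof; note also that the global incidence count $(q+1)q=q^2+q$ you compute cannot settle it, since (for example) $q/2$ external points lying on three tangents each and all remaining external points on exactly one is numerically consistent with that total.

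The missing idea, which is how the paper closes both cases, is to localize the count to a fixed line of known type. For $q$ odd, fix one tangent $\ell_0$: each of its $q$ points off the oval lies, by your parity argument, on an odd number of \emph{further} tangents, hence on at least one more; but the other $q$ tangents meet $\ell_0$ in exactly $q$ points, all off the oval (a tangent at $P'$ cannot pass through the point of tangency of $\ell_0$, else it would be a secant), so these $q$ extra incidences are forced to be distributed one per point --- every point of $\ell_0$ off the oval lies on exactly two tangents, and since any point on some tangent lies on a tangent, ``zero or two'' follows for all external points. For $q$ even, fix a secant instead: each of its $q-1$ external points lies on an odd number of tangents, the $q+1$ tangents contribute exactly $q+1$ incidences with this secant, and two of these are absorbed by the two oval points; so each external point of a secant lies on exactly one tangent. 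Hence a point lying on two tangents lies on no secant, so all $q+1$ lines through it meet the oval in at most one point, and since they must cover all $q+1$ points of the oval, every one of them is tangent --- which is the concurrency you were after. Your first paragraph (one tangent per point of the oval) and the identity $2s+t=q+1$ are fine and match the paper; it is the exact-value step in both bullets that needs the localized count above.
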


\begin{proof}
    Consider a point on the oval. Then there are $q$ lines through this point intersecting the oval in one more point. This means that one line remains, which is necessarily a tangent line, hence proving the first part of the lemma.
    
    Now suppose that $q$ is odd and consider a point on a tangent line, not on the oval. As $q+1$ is even, this point is incident with an odd number of tangent lines more. Since the point is arbitrary, and there are $q+1$ tangent lines, this implies that every point on the tangent line (but not on the oval) is incident with exactly two tangent lines.    
    
    When $q$ is even, we consider a point on a secant line, but not on the oval and proceed in a similar fashion as before: $q-1$ is odd, so this point is incident with an odd number of tangents. Since this point is arbitrary, and there are $q+1$ tangent lines, this implies that every point on the secant line is incident with exactly one tangent line. Therefore the intersection point of two tangent lines is necessarily the intersection of all tangent lines.
\end{proof}

When $q$ is even, one can add the point of concurrency of the tangent lines, which is called the \textit{nucleus}, to the oval to obtain a set of $q+2$ points that has zero or two points in common with every line. This leads us to the following definition.

\begin{definition}
    A \textit{hyperoval} is set of $q+2$ points in $\PG(2,q)$ such that every line has zero or two points in common. A \textit{dual hyperoval} is a set of $q+2$ lines such that every point is incident with zero or two lines.
\end{definition}

We will encounter these objects again later on. We are now in the position to define projective bundles.

\begin{definition}
    A \textit{projective bundle} is a collection of $q^2+q+1$ ovals of $\PG(2,q)$ mutually intersecting in a unique point.
\end{definition}

Projective bundles were introduced by Glynn in his Ph.D.\ thesis \cite{Glynn} under the name `packings of $(q+1)$-arcs'. The original definition is a bit more general and applies to any projective plane instead of just $\PG(2,q)$. Since the only known projective bundles exist in $\PG(2,q)$, it suffices for our purposes to restrict ourselves to this case.

It follows from the definition that one can consider the points of $\PG(2,q)$ and the ovals of a projective bundle as the points and lines of a projective plane of order $q$. We can then define the notion of secant, tangent and skew ovals (which belong to the projective bundle) with respect to a line. Moreover, one can interchange the role of lines and ovals in the proof of \ref{lem:ovaltangents} and find the following statement, which we record for convenience.

\begin{lemma}\label{lem:linetangents}
    Given a projective bundle, a line in $\PG(2,q)$ has $q+1$ tangent ovals, one in each point. 
    \begin{itemize} 
        \item If $q$ is even, then all tangent ovals are concurrent.
        \item If $q$ is odd, every point not on the line is incident with zero or two tangent ovals.
    \end{itemize}
\end{lemma}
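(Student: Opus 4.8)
\medskip
\noindent\textbf{Proof sketch.}
The plan is to deduce Lemma~\ref{lem:linetangents} directly from Lemma~\ref{lem:ovaltangents} via the point--oval duality noted just above: the points of $\PG(2,q)$ together with the ovals of the projective bundle form a projective plane $\Pi'$ of order $q$, with incidence inherited from $\PG(2,q)$.

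The first step is to recognize the chosen line $\ell$ of $\PG(2,q)$ as an oval of $\Pi'$. Indeed $\ell$ is a set of $q+1$ points, and any line of $\Pi'$---that is, any oval of the bundle, hence any oval of $\PG(2,q)$---meets $\ell$ in at most two points, precisely because $\ell$ is a line of $\PG(2,q)$; note also that $\ell$ is not itself a line of $\Pi'$, since a line of $\PG(2,q)$ is not an oval of $\PG(2,q)$ for $q\ge 2$. Under this correspondence an oval of the bundle is tangent, secant or skew to $\ell$ exactly when, viewed as a line of $\Pi'$, it is tangent, secant or skew to the oval $\ell$.

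The second step is to apply Lemma~\ref{lem:ovaltangents} to the oval $\ell$ inside $\Pi'$. The key observation is that the proof of Lemma~\ref{lem:ovaltangents} uses only incidence counting and the parities of $q$ and $q\pm 1$; it never uses that the ambient plane is Desarguesian, so it holds verbatim in the plane $\Pi'$ of order $q$. Translating its conclusion back through the dictionary of the first step gives exactly Lemma~\ref{lem:linetangents}: the oval $\ell$ has $q+1$ tangent lines of $\Pi'$, one in each of its points---i.e.\ $q+1$ tangent ovals of the bundle, one through each point of $\ell$; for $q$ odd, every point of $\Pi'$ off $\ell$ lies on zero or two tangent lines of $\Pi'$---i.e.\ every point of $\PG(2,q)$ not on $\ell$ lies on zero or two tangent ovals; and for $q$ even, the $q+1$ tangent lines of $\Pi'$ to $\ell$ are concurrent, so the $q+1$ tangent ovals of the bundle all pass through a common point of $\PG(2,q)$.

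I do not expect a genuine obstacle: the argument is a careful translation, and all the combinatorial content already lives in Lemma~\ref{lem:ovaltangents}. If one prefers a self-contained proof that literally interchanges the roles of lines and ovals, the delicate case is $q$ even. There one first shows, by counting the ovals of the bundle through a fixed point $R\notin\ell$ against the $q+1$ points of $\ell$, that $R$ is incident with an odd (hence nonzero) number of tangent ovals; upgrading this to concurrency then needs a second count, of the $\binom{q+1}{2}$ pairs of tangent ovals---each of which meets in a unique point not on $\ell$---forcing a single point to lie on all $q+1$ tangent ovals while every other point off $\ell$ lies on exactly one. In the reduction to Lemma~\ref{lem:ovaltangents} this subtlety is handled automatically.
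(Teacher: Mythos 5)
Your proposal is correct and matches the paper's intent exactly: the paper proves this lemma only by the remark that one may ``interchange the role of lines and ovals in the proof of Lemma~\ref{lem:ovaltangents}'', which is precisely the duality you make explicit by checking that $\ell$ is an oval of the plane $\Pi'$ formed by the points and the bundle's ovals, and that the counting in Lemma~\ref{lem:ovaltangents} is purely combinatorial and so valid in $\Pi'$. Your fleshed-out version is if anything more careful than the paper's one-line justification.
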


When $q$ is even, we can similarly as before define a \textit{hyperoval of ovals} as a set of $q+2$ ovals such that every point is contained in zero or two of them.

An interesting property of projective bundles is that a third projective plane can be found. This result is due to Glynn \cite[Theorem 1.1.1]{Glynn} and served as the motivation for projective bundles: to possibly find new projective planes from known ones.

\begin{theorem}\label{cor:tangencyprojplane}
    Consider the ovals of a projective bundle and the lines of $\PG(2,q)$ as points and lines respectively, with incidence defined by tangency. Then this point-line geometry is a projective plane of order $q$. 
\end{theorem}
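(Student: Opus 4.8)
The plan is to verify that the point-line geometry $\Gamma$ of the statement---points the $q^2+q+1$ ovals of the bundle (write $\mathcal{B}$ for this collection), lines the $q^2+q+1$ lines of $\PG(2,q)$, incidence given by tangency---satisfies the axioms of a projective plane of order $q$. Two of the numerical requirements are already in hand: by Lemma~\ref{lem:ovaltangents} every oval has exactly $q+1$ tangent lines, so every point of $\Gamma$ lies on $q+1$ lines; and by Lemma~\ref{lem:linetangents} every line of $\PG(2,q)$ is tangent to exactly $q+1$ ovals of $\mathcal{B}$, so every line of $\Gamma$ carries $q+1$ points. Since $\Gamma$ also has $q^2+q+1$ points and $q^2+q+1$ lines, it suffices to prove that any two distinct ovals $O_1,O_2\in\mathcal{B}$ have \emph{exactly one} common tangent line: this is the ``two points on a unique line'' axiom, the dual axiom then follows by the usual count (a fixed line of $\Gamma$ has $q+1$ points, each on $q$ further lines of $\Gamma$, and those $q(q+1)=q^2+q$ lines are pairwise distinct, hence exhaust the remaining lines), and non-degeneracy is immediate since $q\ge 2$.

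Everything thus reduces to controlling $c(O_1,O_2):=\#\{\text{lines tangent to both }O_1\text{ and }O_2\}$, and the crucial step is the \emph{parity} claim that $c(O_1,O_2)$ is always odd. By definition of a projective bundle, $O_1\cap O_2$ is a single point $P$. Consider the $q+1$ tangent lines $t$ of $O_1$ and compute $\sum_t\card{t\cap O_2}$ modulo $2$. On one hand $\card{t\cap O_2}\in\{0,1,2\}$, so this sum is congruent to $\#\{t:\card{t\cap O_2}=1\}=c(O_1,O_2)$. On the other hand, interchanging the order of summation, it equals $\sum_{X\in O_2}\#\{\text{tangents of }O_1\text{ through }X\}$. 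Exactly one point of $O_2$, namely $P$, lies on $O_1$, and it lies on a single tangent of $O_1$ (the tangent at $P$; any other tangent through $P$ would be a secant of $O_1$). Each of the remaining $q$ points of $O_2$ lies off $O_1$, and Lemma~\ref{lem:ovaltangents} controls the number of tangents of $O_1$ through such a point: for $q$ odd it is even, while for $q$ even it equals $1$ unless the point is the nucleus of $O_1$ (in which case it is $q+1$), and a short check then gives that the total over all $X\in O_2$ is odd. Hence $c(O_1,O_2)$ is odd; in particular $c(O_1,O_2)\ge 1$.

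A double count now pins down the value. Summing over unordered pairs of distinct ovals and regrouping by the common tangent line, $\sum_{\{O_1,O_2\}}c(O_1,O_2)=\sum_{\ell}\binom{q+1}{2}=(q^2+q+1)\binom{q+1}{2}$, using Lemma~\ref{lem:linetangents} for the middle equality. But $(q^2+q+1)\binom{q+1}{2}=\binom{q^2+q+1}{2}$ is exactly the number of unordered pairs of distinct ovals, so $c$ has average value $1$ over all pairs; combined with $c(O_1,O_2)\ge 1$ for every pair, this forces $c(O_1,O_2)=1$ throughout. This supplies the last axiom, and $\Gamma$ is a projective plane of order $q$.

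The main obstacle is the parity step: one has to spot that the right invariant to track is $\sum_t\card{t\cap O_2}\bmod 2$---i.e.\ that the interaction of the tangent lines of $O_1$ with the point set $O_2$ is governed by Qvist's lemma---and one has to handle the two parities of $q$ with slightly different bookkeeping, since for $q$ even the clean statement ``an external point lies on an even number of tangents'' fails at one point (the nucleus of $O_1$) and holds only after summing over $O_2$. Everything past ``$c(O_1,O_2)$ is odd'' is routine symmetric-design counting. (For $q$ even one can also bypass the parity argument: the tangents of an oval form the pencil through its nucleus, and a quick double count shows the nucleus map is a bijection from $\mathcal{B}$ onto the points of $\PG(2,q)$, identifying $\Gamma$ with $\PG(2,q)$ itself.)
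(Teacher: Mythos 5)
The paper does not actually prove this statement; it is quoted from Glynn's thesis \cite[Theorem 1.1.1]{Glynn}, so there is no internal proof to compare against. Your argument is a correct, self-contained proof using only material already established in the paper. The reduction is sound: with $q^2+q+1$ points and lines, point-degree and line-degree $q+1$ (Lemmas \ref{lem:ovaltangents} and \ref{lem:linetangents}), it suffices to show that two distinct ovals of the bundle have exactly one common tangent, and your derivation of the dual axiom from this is the standard count. The two key steps both check out. First, the parity step: writing $\sum_t \card{t\cap O_2}$ over the $q+1$ tangents $t$ of $O_1$, the sum is $\equiv c(O_1,O_2) \pmod 2$ since $\card{t\cap O_2}\le 2$, while summing over $X\in O_2$ gives $1$ from the unique intersection point $P$ (a tangent of $O_1$ through $P\in O_1$ must be the tangent at $P$) plus even contributions from the other $q$ points when $q$ is odd (Qvist), and plus $q\cdot 1$ or $(q-1)\cdot 1+(q+1)$ when $q$ is even (according as the nucleus of $O_1$ lies on $O_2$ or not) -- odd in every case, so $c(O_1,O_2)\ge 1$. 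Second, the averaging step: $\sum_{\{O_1,O_2\}}c(O_1,O_2)=(q^2+q+1)\binom{q+1}{2}=\binom{q^2+q+1}{2}$ forces $c\equiv 1$ given $c\ge 1$ everywhere. This is an appealingly economical way to upgrade ``at least one common tangent'' to ``exactly one'' without case analysis. Your closing parenthetical for $q$ even (nucleus map is a bijection) also holds and can indeed be verified by an independent double count, but it is not needed for the main line of the argument.
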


We can rephrase this in terms of incidence matrices. As follows: if $A$ and $B$ are the point-line incidence matrices of the $\PG(2,q)$ and the projective plane whose lines are the ovals of a projective bundle, then $AB^\top \!\pmod{2}$ is again the incidence matrix of a projective plane. However, for $q$ even this idea to construct new projective planes does not work, since then all three projective planes are isomorphic \cite[Corollary 1.1.1]{Glynn}.

Glynn showed that projective bundles indeed exist for any $q$, and his examples are all bundles of conics. When $q$ is odd, he showed the existence of three distinct types of projective bundles in $\PG(2,q)$, by identifying them with planes in $\PG(5,q)$. It was shown in \cite{Baker} that perfect difference sets can also be used to describe these projective bundles. In fact, given a perfect difference set $D\subseteq \Z/(q^2+q+1)\Z$ and its circular shifts corresponding to the set of lines of $\PG (2, q)$, the three bundles are represented in the following way.
\begin{enumerate}
    \item \textit{Cirumscribed bundle:} set of all circular shifts of $-D$.
    \item \textit{Inscribed bundle:} set of all circular shifts of $2D$.
    \item \textit{Self-polar bundle:} set of all circular shifts of $D/2$.
\end{enumerate}

We are now going to construct the parity-check matrix as mentioned at the beginning of this section. Let us denote the projective plane formed by the points and lines of $\PG(2,q)$ by $\Uppi$ and the one formed by the points and the ovals of a projective bundle of $\PG(2,q)$ by $\Upgamma$. Then define
\begin{equation}\label{eq:parity_check_code_bundle}
    H = ( \,A \, \mid \, B\,),
\end{equation}
where $A$ and $B$ are the incidence matrices of $\Uppi$ and $\Upgamma$ respectively.
Hence, we obtain a $(q^2+q+1)\times 2(q^2+q+1)$ binary matrix defined by the points, lines and ovals of a projective bundle of $\PG(2,q)$.

\begin{definition}
	A binary linear code with parity-check matrix $H$ given in \eqref{eq:parity_check_code_bundle} is called a \emph{projective bundle code} and we will denote it by
	\begin{align*}
    	C_2(\Uppi\sqcup\Upgamma)^\perp = \ker (H).
	\end{align*}
\end{definition}

Clearly, the matrix $H$ given in \eqref{eq:parity_check_code_bundle} has constant row-weight $w = 2(q+1)$ and constant column-weight $v = q+1$. Hence, $C_2(\Uppi\sqcup\Upgamma)^\perp$ is an MDPC code of length $n=2(q^2+q+1)$ and type $(q+1, 2(q+1))$.\\
\begin{remark}
The family of MDPC codes that we are considering is built upon a parity-check matrix as in \eqref{eq:parity_check_code_bundle}. In such a matrix the number of columns is twice the number of rows and this  coincides with the setting  originally studied in \cite{mi13}.
\end{remark}

\begin{example}
Let us give a short example of a projective bundle code for a relatively small parameter $q = 3$. Hence, we consider the projective plane $\PG (2, 3)$. Recall, that the set of points $\mathcal{P}$ is given by the set of integers modulo $q^2+q+1 = 13$. The set of lines $\mathcal{L}$ is defined by the image of a perfect difference set $D$ of four integers under repeated application of the Singer cycle $S(i) = i+1$. It is easy to verify that $D = \{ 0, 1, 3, 9 \}$ is a perfect difference set, i.e. $$\mathcal{L} = \lbrace \{ 0+i, 1+i, 3+i, 9 + i\} \, | \, i \in \Z/13\Z \rbrace.$$
At this point, let us choose an inscribed bundle $\mathcal{B}_I$ in $\PG (2, 3)$. As shown above, this bundle is represented by the cyclic shifts of $2D = \lbrace 0, 2, 5, 6 \rbrace$. Hence, we obtain
$$\mathcal{B}_I = \lbrace \{ 0+i, 2+i, 5+i, 6 + i\} \, | \, i \in \Z/13\Z \rbrace.$$
Concatenating the two corresponding incidence matrices $A$ and $B$ yields the desired parity-check matrix
\begin{small}
\begin{align*}
H = \left(\begin{array}{ccccccccccccc|ccccccccccccc}
1 & \cdot & \cdot & \cdot & 1 & \cdot & \cdot & \cdot & \cdot & \cdot & 1 & \cdot & 1 & 1 & \cdot & \cdot & \cdot & \cdot & \cdot & \cdot & 1 & 1 & \cdot & \cdot & 1 & \cdot \\
1 & 1 & \cdot & \cdot & \cdot & 1 & \cdot & \cdot & \cdot & \cdot & \cdot & 1 & \cdot & \cdot & 1 & \cdot & \cdot & \cdot & \cdot & \cdot & \cdot & 1 & 1 & \cdot & \cdot & 1 \\
\cdot & 1 & 1 & \cdot & \cdot & \cdot & 1 & \cdot & \cdot & \cdot & \cdot & \cdot & 1 & 1 & \cdot & 1 & \cdot & \cdot & \cdot & \cdot & \cdot & \cdot & 1 & 1 & \cdot & \cdot \\
1 & \cdot & 1 & 1 & \cdot & \cdot & \cdot & 1 & \cdot & \cdot & \cdot & \cdot & \cdot & \cdot & 1 & \cdot & 1 & \cdot & \cdot & \cdot & \cdot & \cdot & \cdot & 1 & 1 & \cdot \\
\cdot & 1 & \cdot & 1 & 1 & \cdot & \cdot & \cdot & 1 & \cdot & \cdot & \cdot & \cdot & \cdot & \cdot & 1 & \cdot & 1 & \cdot & \cdot & \cdot & \cdot & \cdot & \cdot & 1 & 1 \\
\cdot & \cdot & 1 & \cdot & 1 & 1 & \cdot & \cdot & \cdot & 1 & \cdot & \cdot & \cdot & 1 & \cdot & \cdot & 1 & \cdot & 1 & \cdot & \cdot & \cdot & \cdot & \cdot & \cdot & 1 \\
\cdot & \cdot & \cdot & 1 & \cdot & 1 & 1 & \cdot & \cdot & \cdot & 1 & \cdot & \cdot & 1 & 1 & \cdot & \cdot & 1 & \cdot & 1 & \cdot & \cdot & \cdot & \cdot & \cdot & \cdot \\
\cdot & \cdot & \cdot & \cdot & 1 & \cdot & 1 & 1 & \cdot & \cdot & \cdot & 1 & \cdot & \cdot & 1 & 1 & \cdot & \cdot & 1 & \cdot & 1 & \cdot & \cdot & \cdot & \cdot & \cdot \\
\cdot & \cdot & \cdot & \cdot & \cdot & 1 & \cdot & 1 & 1 & \cdot & \cdot & \cdot & 1 & \cdot & \cdot & 1 & 1 & \cdot & \cdot & 1 & \cdot & 1 & \cdot & \cdot & \cdot & \cdot \\
1 & \cdot & \cdot & \cdot & \cdot & \cdot & 1 & \cdot & 1 & 1 & \cdot & \cdot & \cdot & \cdot & \cdot & \cdot & 1 & 1 & \cdot & \cdot & 1 & \cdot & 1 & \cdot & \cdot & \cdot \\
\cdot & 1 & \cdot & \cdot & \cdot & \cdot & \cdot & 1 & \cdot & 1 & 1 & \cdot & \cdot & \cdot & \cdot & \cdot & \cdot & 1 & 1 & \cdot & \cdot & 1 & \cdot & 1 & \cdot & \cdot \\
\cdot & \cdot & 1 & \cdot & \cdot & \cdot & \cdot & \cdot & 1 & \cdot & 1 & 1 & \cdot & \cdot & \cdot & \cdot & \cdot & \cdot & 1 & 1 & \cdot & \cdot & 1 & \cdot & 1 & \cdot \\
\cdot & \cdot & \cdot & 1 & \cdot & \cdot & \cdot & \cdot & \cdot & 1 & \cdot & 1 & 1 & \cdot & \cdot & \cdot & \cdot & \cdot & \cdot & 1 & 1 & \cdot & \cdot & 1 & \cdot & 1 \\
\end{array}\right) ,
\end{align*}
\end{small}
where the zero entries in the parity-check matrix are represented by dots.
\end{example}

\begin{remark}
 Observe that the matrix $H$ defined in \eqref{eq:parity_check_code_bundle} can be constructed from a perfect difference set $D$, by taking the circular shifts of $D$ and $sD$, with $s \in \{-1,2,2^{-1}\}$. Such a matrix has a double circulant structure. Thus, the resulting code $C_2(\Uppi\sqcup \Upgamma)^\perp$ is quasi-cyclic of index $2$, and encoding can be
achieved in linear time and implemented with linear feedback shift registers. Furthermore, we can also deduce -- because of the circular structure -- that the number of bits required to describe 
the parity check matrix is about half the block length. It would even be less if one compresses 
the data.
\end{remark}

In the following subsections we will analyse the dimension, minimum distance and error-correction performance with respect to the bit-flipping decoding algorithm of $C_2(\Uppi\sqcup\Upgamma)^\perp$.

\subsection{Dimension}


Recall from Theorem \ref{thm:parameters_code_plane} that a $p$-ary code $C_p(\Uppi)$ from a projective plane $\Uppi \cong \PG(2,q)$, is either trivial of codimension $1$ -- when $p \nmid q$ -- or it is non-trivial to determine its dimension -- when $p \mid q$.
In our case, the structure of our code allows to both have a non-trivial code and to determine the exact dimension for all $q$. To do so, recall that if $A$ is the incidence matrix of a projective plane of order $q$, then
\[AA^\top = A^\top \!A = qI + J,\]
where $J$ is the all-one matrix of appropriate size.

Using this result we are able to state the dimension of $C_2(\Uppi\sqcup\Upgamma)^\perp$. 

\begin{proposition}\label{prop:dimension_basecase}
    Let $\Uppi$ be a projective plane of order $q$ and let $\Upgamma$ be a projective bundle in $\Uppi$. Then, $$\dim\left(C_2(\Uppi\sqcup\Upgamma)^\perp\right) = \begin{cases}
    q^2+q+2 & \mbox{ if } q \mbox{ is odd, }\\
    2^{2h+1}+2^{h+1}-2(3^h)+1 & \mbox{ if } q=2^h.
    \end{cases}$$
\end{proposition}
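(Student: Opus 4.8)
The plan is to compute the $\F_2$-rank of $H$, since $\dim\code=2(q^2+q+1)-\rank_2 H$. Let $\cC_\Uppi$ and $\cC_\Upgamma$ be the binary codes generated by the characteristic vectors of the lines of $\Uppi$ and of the ovals of $\Upgamma$; these are precisely the column spaces of $A$ and of $B$, so
\[
\rank_2 H\;=\;\dim(\cC_\Uppi+\cC_\Upgamma)\;=\;\rank_2 A+\rank_2 B-\dim(\cC_\Uppi\cap\cC_\Upgamma).
\]
Since the points of $\PG(2,q)$ together with the ovals of $\Upgamma$ form a projective plane of order $q$ (isomorphic to $\PG(2,q)$ when $q$ is even, by Glynn), $B$ also satisfies $BB^\top=B^\top B=qI+J$, and Theorem~\ref{thm:parameters_code_plane} gives $\rank_2 A=\rank_2 B$, equal to $q^2+q$ for $q$ odd and to $3^h+1$ for $q=2^h$. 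Everything therefore reduces to computing $I:=\dim_{\F_2}(\cC_\Uppi\cap\cC_\Upgamma)$.

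For $q$ odd, $q+1$ is even, so every line and every oval has even weight; hence $\cC_\Uppi$ and $\cC_\Upgamma$ both lie in the even-weight code, which has dimension $q^2+q=\rank_2 A$. Thus $\cC_\Uppi=\cC_\Upgamma$ is the even-weight code, $I=q^2+q$, and $\dim\code=2(q^2+q+1)-(q^2+q)=q^2+q+2$.

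For $q=2^h$, summing all $q^2+q+1$ lines (resp.\ ovals) over $\F_2$ yields the all-ones vector $\mathbf 1$, because every point lies on $q+1$ (an odd number) of them; so $\mathbf 1\in\cC_\Uppi\cap\cC_\Upgamma$ and $I\ge 1$, and the point is to prove $I=1$. Here I would exploit that $\cC_\Uppi$ and $\cC_\Upgamma$ are cyclic of length $n=q^2+q+1$: the lines, and the ovals of the bundle, each form a single Singer orbit (this is what makes $H$ double-circulant), and if the lines are the cyclic shifts of a perfect difference set $D$ then the ovals are the cyclic shifts of $sD$ for a suitable unit $s$ modulo $n$ (e.g.\ $s=-1$ for the circumscribed bundle; note $s$ cannot be a power of $2$ modulo $n$, for otherwise the ``ovals'' would just be lines). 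In the semisimple algebra $\F_2[x]/(x^n-1)$, write $\cC_\Uppi$ as the sum of the components indexed by a union $T$ of $2$-cyclotomic cosets, so $|T|=\rank_2 A=3^h+1$ and $0\in T$ (the component at $0$ being $\langle\mathbf 1\rangle$). The substitution $x\mapsto x^s$ is a ring automorphism carrying $\cC_\Uppi$ to $\cC_\Upgamma$, so $\cC_\Upgamma$ is indexed by $s^{-1}T$ and $I=|T\cap s^{-1}T|$; it then suffices to show $T\cap s^{-1}T=\{0\}$. For this I would use the classical description of the code of $\PG(2,2^h)$: identifying its points with $\F_{q^3}^{\ast}/\F_q^{\ast}$ and the characteristic vector of a line $\{y:\mathrm{Tr}_{q^3/q}(cy)=0\}$ with $\mathbf 1+\mathrm{Tr}_{q^3/q}(cy)^{q-1}$, the $\F_2$-multinomial expansion shows that, up to rescaling by $q-1$ and a sign, $T\setminus\{0\}$ is the set of residues $E(\mathbf r)=\sum_{j=0}^{h-1}2^{\,j+h r_j}$ with $\mathbf r\in\{0,1,2\}^{h}$; in particular each such residue has exactly $h$ ones in its binary expansion (and there are $3^h$ of them, matching $|T|=3^h+1$). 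A coincidence $s\,t\equiv t'\pmod n$ with $t,t'\in T\setminus\{0\}$ then becomes, after clearing the denominator, $s\,E(\mathbf r)\equiv E(\mathbf r')\pmod{q^3-1}$; for $s=-1$ this forces $E(\mathbf r)+E(\mathbf r')=q^3-1$, which is impossible since the binary weight of the left-hand side is at most $2h$ while $q^3-1=2^{3h}-1$ has weight $3h$. Hence $I=1$ and $\dim\code=2(q^2+q+1)-(2(3^h+1)-1)=2^{2h+1}+2^{h+1}-2(3^h)+1$.

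The main obstacle is the even case, concretely the identity $T\cap s^{-1}T=\{0\}$: a generic union of $2$-cyclotomic cosets of total size $3^h+1$ would meet a multiplier-translate of itself in far more than one point, so one genuinely needs the fine arithmetic description of $T$ above (whose derivation via Lucas' theorem takes some care) together with the exact value of the multiplier $s$ produced by the bundle. The remaining ingredients — the rank reduction of the first paragraph, the odd case, and the bound $I\ge 1$ — are routine.
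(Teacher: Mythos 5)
Your opening reduction $\rank_2 H=\rank_2 A+\rank_2 B-\dim(\cC_\Uppi\cap\cC_\Upgamma)$ and your odd case are correct, and essentially equivalent to the paper's argument (the paper phrases the upper bound as the all-one vector lying in the left kernel of $H$, which is the same as saying the column space of $H$ sits inside the even-weight code).

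The gap is in the even case. First, the proposition is stated for an \emph{arbitrary} projective bundle $\Upgamma$, but your argument only applies to bundles that arise as the Singer orbit of a multiplier translate $sD$ of the line difference set; nothing in the definition of a projective bundle forces that shape, so even a complete version of your computation would prove a weaker statement. Second, within that class the decisive identity $T\cap s^{-1}T=\{0\}$ is carried out only for $s=-1$, and you concede that other values of $s$ would need separate treatment; note that for $q=2^h$ the other two multipliers listed in the paper, $2$ and $2^{-1}$, are powers of $2$ modulo $q^2+q+1$ and hence reproduce the lines themselves, so your sketch in fact covers exactly one bundle. (The ingredients you do supply for $s=-1$ are sound: the Lucas-theorem description of the nonzero frequencies as $E(\mathbf r)/(q-1)$ with $\weight(E(\mathbf r))=h$ in binary, and the digit-sum estimate $\weight(E(\mathbf r)+E(\mathbf r'))\le 2h<3h=\weight(q^3-1)$ ruling out $E(\mathbf r)+E(\mathbf r')=q^3-1$.) The paper closes the case for every bundle with a much shorter argument: modulo $2$, the product $A^\top\!B$ is the line--oval tangency matrix, which by Glynn's theorem is again the incidence matrix of a projective plane of order $q$, isomorphic to $\PG(2,q)$ when $q$ is even, so $\rank(A^\top\!B)=\rank(A)=\rank(B)$; then
\[
\rank(H)\;\ge\;\rank(H^\top\!H)\;=\;\rank\begin{pmatrix} J& A^\top\!B\\ (A^\top\!B)^\top & J\end{pmatrix}\;\ge\;2\rank(A^\top\!B)-1,
\]
which together with $\rank(H)\le \rank(A)+\rank(B)-1$ (your observation that $I\ge 1$) pins down $\rank(H)=2\rank(A)-1$. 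To salvage your spectral route you would have to prove the coset disjointness for every multiplier $s$ that actually yields a projective bundle in $\PG(2,2^h)$, or explicitly restrict the claim to the circumscribed bundle.
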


\begin{proof}
    In order to determine the dimension of the code, we need to compute the rank of a parity-check matrix $H = ( \,A \, \mid \, B\,)$. Since $H$ is of size $(q^2+q+1)\times 2(q^2+q+1)$, we can already say that the rank of $H$ is at most $q^2+q+1$. Now we consider the two cases.
    
    \noindent \underline{\textbf{Case I: $q$ odd}}. We know from Theorem \ref{thm:parameters_code_plane} that $\rank(A) = q^2+q$, which gives us the lower bound $\rank(H) \geq \rank(A) = q^2+q$.\\
    The matrix $H$ has full rank $q^2+q+1$ if and only if there exists no element in the left-kernel, i.e. if there is no non-zero vector $x\in \mathbb{F}_2^{q^2+q+1}$ such that
    \begin{align}\label{leftkernel_element}
        x  H = 0.
    \end{align}
    However, if $x$ is the all-one vector then Equation (\ref{leftkernel_element}) is satisfied. Hence, there is an element in the cokernel which implies that $H$ cannot have full rank and we conclude that $\dim C_2(\Uppi\sqcup\Upgamma)^\perp = q^2+q+2$. 
    
    \noindent \underline{\textbf{Case II: $q$ even}}. In this case, we consider the matrix 
    $$ H^\top \!H=\begin{pmatrix} A^\top\! A \;& A^\top \!B\\ B^\top \!A \;& B^\top \!B
    \end{pmatrix}=\begin{pmatrix} J & A^\top \!B\\ (A^\top \!B)^\top& J
    \end{pmatrix}.$$
    By Theorem \ref{cor:tangencyprojplane} and the discussion below, $A^\top \!B = C$ is again the incidence matrix of $\PG(2,q)$, and hence the sum of all its rows/columns is equal to the all one vector. Therefore, by doing row operations on $H^\top\!H$, we obtain the matrix
    $$\begin{pmatrix} 0 & A^\top \!B+J\\ (A^\top \!B)^\top& J
    \end{pmatrix}, $$
    which has the same rank as $H^\top \!H$. Hence,
    $$ \rank(H)\geq \rank(H^\top\!H)=\rank(A^\top \!B)+\rank(A^\top \!B+J)\geq 2\rank(A^\top\!B)-1,$$
    where the last inequality comes from the fact that $J$ has rank $1$, and the rank satisfies the triangle inequality. On the other hand, we have that the all one vector is in the column spaces of both $A$ and $B$, showing that $\rank(H)\leq \rank(A)+\rank(B)-1$. Since $A$, $B$ and $A^\top \!B$ are all incidence matrices of a Desarguesian plane, they all have the same rank. Therefore, combining the two inequalities, we obtain
    $$ \rank(H)=2\rank(A)-1,$$
    and using Theorem \ref{thm:parameters_code_plane}, we can conclude that 
    \begin{align*}
        \dim\left(C_2(\Uppi\sqcup\Upgamma)^\perp\right)&=2(q^2+q+1)-\rank(H)=2(q^2+q+1)-2\rank(A)+1\\
        &=2\dim(C_2(\Uppi)^\perp))+1=2^{2h+1}+2^{h+1}-2(3^h)+1. 
    \end{align*}
\end{proof}

\noindent We can thus already say that $C_2(\Uppi\sqcup\Upgamma)^\perp$ is a $[2(q^2+q+1), q^2+q+2]_2$ MDPC code of type $(q+1,2q+1)$.


\subsection{Minimum Distance}

As mentioned earlier, we are interested in the error-correction capability. A relevant quantity to give information about error-correction and also error-detection is the minimum distance of a linear code. 

In the following we will determine the exact value of the minimum distance of $C_2(\Uppi\sqcup\Upgamma)^\perp$. An important observation for the proof is that geometrically, the support of a codeword of $\code$ corresponds to a set of lines and ovals such that every point of $\PG(2,q)$ is covered an even number of times.

\begin{theorem}\label{thm:minimumdistance_codeebundle}
    The minimum distance of $\code$ is $q+2$ and the supports of the minimum weight codewords can be characterized, depending on the parity of $q$. For $q$ odd, the support of a minimum weight codeword is
    \begin{itemize}
        \item an oval and its $q+1$ tangent lines, or
        \item a line and its $q+1$ tangent ovals.
    \end{itemize}
    On the other hand for $q$ even, we find that the support of a minimum weight codeword is
    \begin{itemize}
        \item a dual hyperoval, or
        \item a hyperoval of ovals.
    \end{itemize}
\end{theorem}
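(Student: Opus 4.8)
The plan is to read a codeword $c=(c_L,c_O)$ of $\code$ as a pair $(\mathcal S_L,\mathcal S_O)$, where $\mathcal S_L=\mathrm{supp}(c_L)$ is a set of lines of $\Uppi$ and $\mathcal S_O=\mathrm{supp}(c_O)$ a set of ovals of the bundle, the parity-check equations saying precisely that every point of $\PG(2,q)$ lies on an even number of elements of $\mathcal S_L\cup\mathcal S_O$; the weight of $c$ is $|\mathcal S_L|+|\mathcal S_O|$. For the upper bound I would just check that the four listed configurations are codewords of weight $q+2$: a dual hyperoval and a hyperoval of ovals cover every point $0$ or $2$ times by definition, while an oval together with its $q+1$ tangent lines covers each point of the oval twice (the point and its unique tangent) and an exterior point $0$ or $2$ times by Lemma~\ref{lem:ovaltangents} ($q$ odd), and dually a line together with its $q+1$ tangent ovals by Lemma~\ref{lem:linetangents}. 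Existence for the relevant parity is classical: for $q$ even a dual hyperoval is the dual of a hyperoval of $\PG(2,q)$, and a hyperoval of ovals is nothing but a dual hyperoval of the projective plane $\Upgamma$, which is isomorphic to $\PG(2,q)$ by \cite{Glynn}.

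For the lower bound I would put $v:=Ac_L^\top=Bc_O^\top$ and exploit the three incidence matrices $A$, $B$ and $C:=A^\top\!B\pmod 2$, the last being the incidence matrix of the tangency projective plane $\Sigma$ of order $q$ (Theorem~\ref{cor:tangencyprojplane}), together with the identities $A^\top\!A=B^\top\!B=qI+J$. Multiplying $v$ on the left by $A^\top$ and by $B^\top$ yields $Cc_O^\top=(qI+J)c_L^\top$ and $C^\top c_L^\top=(qI+J)c_O^\top$, which, read inside $\Sigma$, tie the point set $\mathcal S_O$ (points of $\Sigma$) and the line set $\mathcal S_L$ (lines of $\Sigma$) together. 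The codes $\ker A=C_2(\Uppi)^\perp$, $\ker B=C_2(\Upgamma)^\perp$ and $\ker C=C_2(\Sigma)^\perp$ have minimum distance $q^2+q+1$ when $q$ is odd and $q+2$ when $q$ is even, with minimum-weight words exactly the dual hyperovals of the respective plane (\cite{Assmus}); these facts do the bookkeeping.

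When $q$ is even this settles the claim quickly. Pairing $Cc_O^\top=Jc_L^\top$ with the all-one vector and using that the rows and columns of $C$ have even weight $q+1$ (Lemmas~\ref{lem:ovaltangents},~\ref{lem:linetangents}) forces $|\mathcal S_L|$ and $|\mathcal S_O|$ to be even; hence $Cc_O^\top=0$ and $C^\top c_L^\top=0$, so $\mathcal S_O$ (a point set of $\Sigma$) and $\mathcal S_L$ (a point set of the dual plane $\cong\Sigma$) are codewords of $C_2(\Sigma)^\perp$. If both are nonempty each has size $\geq q+2$ and the weight is $\geq 2q+4$; if exactly one is empty then $v=0$ and the nonempty side is a nonzero codeword of $C_2(\Uppi)^\perp$ or of $C_2(\Upgamma)^\perp$, so the weight is $\geq q+2$ with equality precisely at a dual hyperoval, respectively a hyperoval of ovals.

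When $q$ is odd one has $\ker A=\ker B=\ker C=\{0,\mathbf 1\}$, so the oval side of a codeword is determined by the line side up to complementation: $c_O\in\{B^\top v,\ B^\top v+\mathbf 1\}$, i.e.\ $\mathcal S_O$ equals the set of ovals tangent to an odd number of lines of $\mathcal S_L$ — equivalently the symmetric difference in $\Sigma$ of the lines of $\mathcal S_L$, call it $S$ — or its complement. Hence, minimizing over codewords, the weight is $|\mathcal S_L|+\min(|S|,\ q^2+q+1-|S|)$. The key combinatorial input is a ``lonely point'' estimate in $\Sigma$: a fixed line $\ell\in\mathcal S_L$ meets the other $|\mathcal S_L|-1$ lines in at most $|\mathcal S_L|-1$ points, so at least $q+2-|\mathcal S_L|$ of its points lie on $\ell$ alone and therefore belong to $S$; summing over $\mathcal S_L$ gives $|S|\geq|\mathcal S_L|\,(q+2-|\mathcal S_L|)$, and a dual estimate (or, for $|\mathcal S_L|\geq q+1$, simply $|S|\leq q^2+q$) bounds the complementary term. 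Substituting these into the weight expression gives weight $\geq q+2$; chasing the equality cases, $|\mathcal S_L|+|S|=q+2$ forces $|\mathcal S_L|=1$ (a line of $\PG(2,q)$ and its $q+1$ tangent ovals), while $|\mathcal S_L|+(q^2+q+1-|S|)=q+2$ forces $\mathcal S_L$ to be a pencil of $q+1$ lines of $\Sigma$, that is the $q+1$ tangent lines of some oval $O$, with $\mathcal S_O=\{O\}$ (an oval and its tangent lines). I expect the main obstacle to be exactly this equality analysis for $q$ odd: proving the sharp inequality $|S|\geq|\mathcal S_L|(q+2-|\mathcal S_L|)$ together with its equality case, and showing that $q+1$ lines of a projective plane of order $q$ whose symmetric difference misses only one point must form a pencil. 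With these in hand, and the cited description of the minimum-weight words of $C_2(\PG(2,2^h))^\perp$, the theorem follows.
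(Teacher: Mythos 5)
Your overall strategy is genuinely different from the paper's (which runs a single variance-type double count of point--block incidences to get $r+s\ge q+2$ for all $q$ at once, and only then splits by parity), and your $q$ odd analysis via $c_O^\top\in\{C^\top c_L^\top,\,C^\top c_L^\top+\mathbf 1^\top\}$ and the ``lonely point'' bound $|S|\ge r(q+2-r)$ can be made to work, modulo the equality bookkeeping you yourself flag (you also need to rule out $r=q+2$, $S=\emptyset$ in the first branch, which follows from $\ker A=\{0,\mathbf 1\}$, and to supply the correct complementary bound $|S|\le rq+1$ for $2\le r\le q$). The upper bound and existence checks are fine.

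The genuine gap is in the $q$ even case. The columns of $C$ have weight $q+1$, which is \emph{odd} when $q=2^h$, so pairing $Cc_O^\top=Jc_L^\top$ with the all-one vector gives $|\mathcal S_O|\equiv|\mathcal S_L|\pmod 2$ --- the two sizes have the \emph{same} parity --- not that both are even. The both-odd case cannot be dismissed: $(\mathbf 1,\mathbf 1)$ is a codeword of $\code$ with both supports of odd size $q^2+q+1$, so no linear-algebraic identity will force evenness. When both sizes are odd you get $Cc_O^\top=\mathbf 1^\top\neq 0$, so $c_O$ is not a codeword of $C_2(\Sigma)^\perp$ but lies in the coset $\mathbf 1+\ker C$, and your entire lower-bound mechanism (minimum distance of $\ker C$, respectively of $\ker A$ or $\ker B$ when one side is empty) no longer applies. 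To close this you would need either a bound on the \emph{maximum} weight of $C_2(\PG(2,2^h))^\perp$ strong enough to make $|\mathcal S_L|+|\mathcal S_O|\ge 2\bigl(q^2+q+1-\max\mathrm{wt}\bigr)\ge q+2$, which is additional nontrivial input, or simply the paper's counting inequality $\sum_i i(i-2)a_i\le(r+s)(r+s-q-2)$, which handles all parities uniformly. As written, the $q$ even lower bound and its equality characterization are not established.
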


\begin{proof}
    Take a codeword of minimum weight in $\code$ and consider its support. This is a set of $r$ lines $L$ and $s$ ovals $O$ such that every point in $\PG(2,q)$ is incident with an even number of these elements. We will show that $r+s \geq q+2$ and equality only holds for the two examples stated. 
    
    Let $a_i$, $0 \leq i \leq 2q+2$, be the number of points that are covered $i$ times, then we can double count the tuples $(P), (P,E_1), (P,E_1,E_2)$, where $P$ is a point and $E_1,E_2 \in L \cup O$ are lines or ovals incident with this point. Remark that by assumption $a_i = 0$ whenever $i$ is odd. We find the following three expressions:
    
    \begin{align}
        \sum_{i=0}^{2q+2} a_i &= q^2+q+1 \\
        \sum_{i=0}^{2q+2} ia_i &= (r+s)(q+1) \\
        \sum_{i=0}^{2q+2} i(i-1)a_i &\leq r(r-1)+s(s-1)+2rs,
    \end{align}
    where the last inequality follows as a line and oval intersect in at most two points. From these equations, we can find $\sum_{i=0}^{2q+2}i(i-2) a_i \leq (r+s)(r+s-q-2)$ and hence $r+s \geq q+2$, as the sum on the left-hand side has only non-negative terms. Moreover, in the case of equality, $a_i = 0$ whenever $i \notin \{0,2\}$.
    
    Now consider a codeword of weight $r+s = q+2$, consisting of $r$ lines $L$ and $s$ ovals $O$. We will investigate the cases $q$ odd and even separately and show the characterisation.
    
    \noindent \underline{\textbf{Case I: $q$ odd}}. Since $q+2$ is odd and hence one of $r$ or $s$ is, we can suppose without loss of generality that $r$ is odd. The argument works the same when $s$ is odd, by interchanging the roles of lines and ovals.
    
    Consider a line not in $L$. Then this line is intersected an odd number of times by the $r$ lines in $L$. Therefore, it should be tangent to an odd number of ovals in $O$, recalling that every point is incident with zero or two elements from $L \cup O$. In particular, any line not in $L$ is tangent to at least one oval in $O$. So count the $N$ pairs $(\ell,c)$, where $\ell$ is a line not in $L$, $c \in O$ and $|\ell \cap O| = 1$. By the previous observation, it follows that $q^2+q+1-r = q^2-1+s \leq N$. On the other hand, a oval has $q+1$ tangent lines so that $N \leq s(q+1)$. Combining these two leads to $s \geq q$, which implies that $r = 1$ and $s = q+1$. Remark that this argument only depends on $r$ being odd.
    
    If $o \in O$ is one of these $q+1$ ovals, we see that the other $q$ ovals intersect $O$ in $q$ distinct points, as no point is incident with more than two elements from $L \cup O$. This immediately implies that the unique line in $L$ must be tangent to $o$. As $O$ was arbitrary, we conclude that the support of the codeword consists of one line and $q+1$ ovals tangent to it. By Lemma \ref{lem:linetangents} this indeed gives rise to a codeword, as every point not on the line is incident with zero or two ovals.
    
    \noindent \underline{\textbf{Case II: $q$ even}}. The situation is slightly different. Since $q+2$ is even now, either $r$ and $s$ are both odd, or both even. When $r$ is odd, we can reuse the argument from before to find the configuration of $q+1$ ovals tangent to a line. However, by Lemma \ref{lem:linetangents} we know that these $q+1$ ovals are all incident with a unique point, which is hence covered $q+1$ times, a contradiction. 
     
    So suppose that $r$ and $s$ are even. Any line in $L$ is intersected by the  $r-1$ other lines in $L$, leaving $q+1-(r-1)$ points to be covered by the ovals in $O$, which is an even number. We see that we must have an even number of tangent ovals to this line. Similarly for a line not in $L$, we observe that it is intersected an even number of times by the $r$ lines in $L$ and hence it should have an even number of intersections with the ovals in $O$, leading again to an even number of tangent ovals. In summary, every line in $\PG(2,q)$ is incident with an even number of tangent ovals. Now, by Lemma \ref{lem:linetangents} and the fact that every point is covered zero or twice by the elements of $O \cup L$, it follows that every line in $\PG(2,q)$ is incident with zero or two tangent ovals. So suppose that $s > 0$, meaning we have at least one oval in $O$ and consider its $q+1$ tangent lines. Then each of these lines should have one more tangent oval, and all of these are distinct by Corollary \ref{cor:tangencyprojplane}, which means we find $s = q+2$ ovals forming a hyperoval of ovals. If $s = 0$, we find a dual hyperoval, concluding the theorem.
\end{proof}

\subsection{Error-Correction Capability}
It is well-known that the minimum distance of a code gives information about the decoding radius. This means that it reveals an upper bound on the amount of errors that can be always detected and  corrected.\\

We would like to focus in this subsection here on the performance of the constructed MDPC code $C_2(\Uppi\sqcup\Upgamma)^\perp$ within one round of the bit-flipping decoding algorithm. We now adapt and apply Theorem \ref{Tillich_correction} to the parity-check matrix $H$ of $C_2(\Uppi\sqcup\Upgamma)^\perp$ given in \eqref{eq:parity_check_code_bundle}. 

\begin{proposition}\label{bit_flipping_bound_mdpc}
The intersection number of the matrix $H$ defined in \eqref{eq:parity_check_code_bundle} is  $s_H=2$. Thus,
  after performing one round of the bit-flipping algorithm on $H$ we can correct all the errors of weight at most  $\lfloor\frac{q+1}{4}\rfloor$ in the code $C_2(\Uppi\sqcup\Upgamma)^\perp$.
\end{proposition}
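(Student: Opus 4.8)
The plan is to read off the maximum column intersection $s_H$ directly from the geometry and then feed it into Theorem~\ref{Tillich_correction}. The columns of $H=(\,A\mid B\,)$ are indexed by the lines of $\Uppi$ and the ovals of $\Upgamma$, and the support of the column attached to a line (respectively an oval) is exactly its set of $q+1$ points of $\PG(2,q)$. So computing $s_H$ amounts to bounding, over all pairs of distinct such point sets, the size of their intersection, and then exhibiting a pair attaining the bound. (One should also note in passing that all these columns are genuinely distinct, since distinct lines and distinct ovals of the bundle have distinct point sets, and a line is never an oval as soon as $q+1\ge 3$.)

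I would then split the analysis into the three possible types of pairs. If both columns come from $A$, the two lines are distinct and meet in exactly one point by the projective plane axioms, so the intersection has size $1$. If both columns come from $B$, the two ovals are distinct members of the projective bundle, which by definition meet in a unique point, so again the intersection has size $1$. Finally, if one column comes from $A$ and the other from $B$, we are intersecting a line with an oval; by the definition of an oval this intersection has at most two points, hence size at most $2$. To see that $2$ is attained, pick any two of the $q+1\ge 3$ points of an oval: they determine a secant line, whose column shares precisely those two points with the column of that oval. Combining the three cases gives $s_H=2$.

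With $s_H=2$ established, the error-correction statement is immediate. The code $C_2(\Uppi\sqcup\Upgamma)^\perp$ is an MDPC code whose parity-check matrix $H$ has constant column weight $v=q+1$, so Theorem~\ref{Tillich_correction} guarantees that one round of bit-flipping on $H$ corrects every error of weight at most $\lfloor \tfrac{v}{2 s_H}\rfloor=\lfloor \tfrac{q+1}{4}\rfloor$.

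There is no serious obstacle here; the argument is a bookkeeping exercise over the definitions of projective plane, oval, and projective bundle together with Theorem~\ref{Tillich_correction}. The only points needing a little care are verifying that the secant line in the mixed case really yields intersection exactly $2$ (which would fail only for degenerate small parameters that do not occur for a prime power $q\ge 2$), and making sure that the quantity $v$ entering Theorem~\ref{Tillich_correction} is the column weight $q+1$ of $H$ rather than its row weight.
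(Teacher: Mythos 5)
Your argument is correct and is essentially identical to the paper's proof: the same three-case analysis of column pairs (line--line, oval--oval, line--oval), the same appeal to the secant line through two points of an oval to show $s_H=2$ is attained, and the same application of Theorem~\ref{Tillich_correction} with $v=q+1$. The extra remarks on distinctness of columns and on $v$ being the column weight are harmless clarifications, not deviations.
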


\begin{proof}
      From the construction of $H$ we have that $H$ consists of two matrices $A$ and $B$ which are the incidence matrices of points and lines and points and ovals of a projective bundle in $\PG(2,q)$, respectively. Clearly, both matrices $A$ and $B$ have a maximum column intersection equal to 1 as two distinct lines in a projective plane intersect in exactly one point and a similar property holds for every pair of distinct ovals of a projective bundle by definition. Since every line intersects an oval in at most 2 points, the maximum column intersection of the matrix $H$ is at most $2$. On the other hand, if we consider any two distinct points on an oval in the projective bundle, there always exists a line passing through them. Hence, $s_H=2$. The second part of the statement then follows directly from Theorem \ref{Tillich_correction}.
\end{proof}

\begin{remark}\label{rem:sh}
Observe that $s_H = 1$ for a parity-check matrix of size $(q^2+q+1)\times c$ and column weight $q+1$ implies $c \leq q^2+q+1$. this can be seen by counting the tuples $\{(x,y,B) \,\, | \,\, x,y \in B\}$ in two ways. Thus, the value $s_H = 2$ is the best possible for $c > q^2+q+1$. 
Furthermore, compared to a random construction of MDPC code, our design guarantees a deterministic error-correction performance for one round of the bit-flipping decoding algorithm. In particular, for the random model proposed in \cite{Tillich} it was proved that the expected value of $s_H$ is $\mathcal O(\frac{\log n}{\log \log n})$. Hence, our construction guarantees an error-correction capability of the bit-flipping algorithm which improves the random construction by a factor $\mathcal O(\frac{\log n}{\log \log n})$.
\end{remark}

Additionally, we have implemented the parity-check matrix for our MDPC-design as well as one round of the bit-flipping decoding algorithm. We were interested if we could correct even more errors than the number guaranteed in Proposition \ref{bit_flipping_bound_mdpc}. Since the bit-flipping decoding algorithm is only dependent on the syndrome and not on the actual chosen codeword, we took the all-zero codeword and added a pseudo-random error-vector of a fixed weight $wt(e) \geq \lfloor \frac{q+1}{4} \rfloor$. We have generated $10^5$ distinct error vectors. Each of these error vectors then was used to run one round of the bit-flipping decoding algorithm for all the three different families of MDPC-codes that we have constructed. It turned out that all the three types of our constructed code showed exactly the same error-correction performance.\\
Finally, we have computed the probability of successful error-correction for the parameters $q \in \lbrace 5, 7, 9, 11, 13, 17, 19, 23, 25 \rbrace $. The following results were obtained for the different error weights.

\begin{table}[H]
\setlength{\tabcolsep}{4mm} 
\def\arraystretch{1.25} 
\centering
\begin{tabular}{|c|p{2.5cm}|p{2.5cm}|p{2.5cm}|}
\hline
$q$ & $\lfloor \frac{q+1}{4} \rfloor + 1$ errors  &  $\lfloor \frac{q+1}{4} \rfloor + 2$ errors & $\lfloor \frac{q+1}{4} \rfloor + 3$ errors\\
\hline
\hline
5 & 50.82\%  & 0.16\% & - \\
7 & 50.10\%  & 0.34\%  & - \\
9 & 79.31 \%  & 3.86\%  & - \\
11 & 43.83\% & 0.19\% & - \\
13 & 90.4\% & 14.4\% & - \\
17 & 97.2\% & 57.8\% & 7.8\% \\
19 & 91.8\% & 42.6\% & 10.7\% \\
23 & 97.86 \% & 77.66\% & 31.3\% \\
25 & 99.87\% & 95.3\% & 71.25\% \\
\hline
\end{tabular}
\caption{Probability to decode a received word, with error-weight $\lfloor \frac{q+1}{4} \rfloor + i$ for $i=1, 2, 3$, correctly after one round of the bit-flipping decoding algorithm.}\label{table:experiment}
\end{table}
Table \ref{table:experiment} shows that the probability to correct even more errors grows as we increase $q$. This is due to the fact, that for small $q$ we reach the unique decoding radius much faster.

\begin{remark}
In \cite{Tillich} the author analyzed also the error-correction performance after two rounds of the bit-flipping decoding algorithm. More precisely he estimated the probability that one round of the algorithm corrects enough errors so that in the second round all remaining errors will be correctable.
Following the notation of that paper, let us denote by $S$ the number of errors left after one round of the bit-flipping algorithm. Assuming that we have an MDPC code of length $n$ and of type $(v, w)$, where both $v$ and $w$ are of order $\Theta(\sqrt{n})$, the probability that $S$ is at least a certain value $t'$ satisfies the following inequality:
\begin{align}\label{prob:two_round_BF}
    \mathbb{P}\left( S \geq t' \right) \leq \frac{1}{\sqrt{t'}} {\rm e}^{\frac{t' v}{4} \ln(1 - {\rm e}^{-\frac{4wt}{n}}) + \frac{t'}{8}\ln(n) + O\left(t'\ln(t'/t) \right)},
\end{align}
where $t = \Theta(\sqrt{n})$ is the initial amount of errors that were introduced.\\
We have seen in Proposition \ref{bit_flipping_bound_mdpc}, that performing one round of the bit-flipping algorithm to a parity-check matrix $H$ of $C_2(\Uppi \sqcup \Upgamma)^\perp$ we can correct $\lfloor \frac{q+1}{4} \rfloor$ errors. Therefore, a second round of the bit-flipping is able to correct completely if after one round there are no more than $\lfloor \frac{q+1}{4} \rfloor$ errors left. Applying \eqref{prob:two_round_BF} for $t' = \lfloor \frac{q+1}{4} \rfloor$ to the parity-check matrix $H$ of $C_2(\Uppi \sqcup \Upgamma)^\perp$ given in \eqref{eq:parity_check_code_bundle}, we obtain that we can successfully correct every error of weight $t=\Theta(\sqrt{n})$ after two rounds of the bit-flipping decoding algorithm with probability
$e^{-\Omega(n)}.$
\end{remark}

\section{Generalizations}\label{section:generalization}
Since our aim is to have more flexibility in the parameters, here we generalize the approach of Section \ref{section_construction}, by considering several disjoint projective bundles.

Let $t>1$ be a positive integer and let us fix a Desarguesian projective plane $\Uppi=\PG(2,q)$. Let $\Upgamma_1,\ldots, \Upgamma_{t}$ be disjoint\footnote{Here with ``disjoint'' we mean that any two distinct projective bundles have no common oval.} projective bundles of conics in $\Uppi$. Since we want $s_H$ to be low, we cannot take projective bundles of ovals in general, as for example two ovals in $\PG(2,q)$, $q$ even, could intersect in up to $q$ points: take any oval, add the nucleus and delete another point to find a second oval intersecting it in $q$ points. In Proposition \ref{cor:bit-flipping_performance_general} we will see that by choosing conics, we find $s_H = 4$.

Let us denote by $A$ the incidence matrix of $\Uppi$ and by $B_i$ the incidence matrix of the projective bundle $\Upgamma_i$, for each $i\in\{1,\ldots,t\}$. We then glue together all these matrices and consider the code $C_2(\Uppi\sqcup\Upgamma_1\sqcup\ldots\sqcup\Upgamma_t)^\perp$ to be the binary linear code whose parity-check matrix is 
\begin{equation}\label{eq:paritycheck_general}
    H_{q,t}:=\left(
    A \mid B_1 \mid \cdots \mid B_t
    \right).
\end{equation}

As already discussed, it is important to specify which parity-check matrix of a code we consider when we study the decoding properties, since the bit-flipping algorithm depends on the choice of the parity-check matrix.

We focus now on the parameters on the constructed codes. We first start with a result on the dimension of the code $C_2(\Uppi\sqcup\Upgamma_1\sqcup\ldots\sqcup\Upgamma_t)^\perp$

\begin{proposition}
     Let $\Uppi=\PG(2,q)$ be a Desarguesian projective plane of order $q$ and let 
     $\Upgamma_1,\ldots,\Upgamma_{t}$ be a projective bundles in $\Uppi$. Then, 
    \begin{align*}\dim\left(C_2(\Uppi\sqcup\Upgamma_1\sqcup\ldots\sqcup\Upgamma_t)^\perp\right) &= 
    t(q^2+q+1)+1 & \mbox{ if } q \mbox{ is odd, }\\
   \dim\left(C_2(\Uppi\sqcup\Upgamma_1\sqcup\ldots\sqcup\Upgamma_t)^\perp\right) & \geq (t+1)(2^{2h}-3^h+2^h)+t  & \mbox{ if } q=2^h.
    \end{align*}
\end{proposition}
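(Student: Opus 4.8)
The plan is to mimic the structure of the proof of Proposition \ref{prop:dimension_basecase}, treating the odd and even cases separately, and in each case bounding $\rank(H_{q,t})$ from above and below.

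\textbf{Odd case.} First I would note that since $H_{q,t}$ has $q^2+q+1$ rows, $\rank(H_{q,t})\le q^2+q+1$. For the lower bound, the submatrix $A$ already has $\rank(A)=q^2+q$ by Theorem \ref{thm:parameters_code_plane}, so $\rank(H_{q,t})\ge q^2+q$. To rule out full rank, I would exhibit a nonzero element of the left kernel: the all-one vector $\mathbf 1$ satisfies $\mathbf 1 H_{q,t}=0$, because in each block ($A$ and every $B_i$) each column has weight $q+1$, which is even when $q$ is odd. Hence $\rank(H_{q,t})=q^2+q$, and
\[
\dim\left(C_2(\Uppi\sqcup\Upgamma_1\sqcup\ldots\sqcup\Upgamma_t)^\perp\right)=t(q^2+q+1)+(q^2+q+1)-(q^2+q)=t(q^2+q+1)+1.
\]

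\textbf{Even case.} Here I would only aim for the stated lower bound on the dimension, equivalently an upper bound on $\rank(H_{q,t})$. The key structural fact is Theorem \ref{cor:tangencyprojplane} (and the discussion below it): for each $i$, the matrix $A^\top B_i \pmod 2$ is again the incidence matrix of $\PG(2,q)$, so over $\F_2$ it has the same rank as $A$, and the all-one vector lies in the column space of $A$ and of each $B_i$. The column space of $H_{q,t}$ is spanned by the union of the column spaces of $A,B_1,\dots,B_t$; since each of these $t+1$ spaces has dimension $\rank(A)=2^{2h}-3^h+2^h$ and they all contain the common vector $\mathbf 1$, a dimension count gives
\[
\rank(H_{q,t})\le (t+1)\rank(A)-t,
\]
where we subtract $t$ because $\mathbf 1$ is counted once per block but contributes only once to the span. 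Therefore
\[
\dim\left(C_2(\Uppi\sqcup\Upgamma_1\sqcup\ldots\sqcup\Upgamma_t)^\perp\right)\ge (t+1)(q^2+q+1)-\big((t+1)\rank(A)-t\big)=(t+1)(2^{2h}-3^h+2^h)+t,
\]
using $q^2+q+1-\rank(A)=\dim(C_2(\Uppi)^\perp)=2^{2h}-3^h+2^h$ from Theorem \ref{thm:parameters_code_plane}.

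\textbf{Main obstacle.} The odd case is essentially immediate once one observes the parity of the column weights. The delicate point in the even case is that the inequality $\rank(H_{q,t})\le (t+1)\rank(A)-t$ is genuinely an inequality: to get a matching lower bound on the rank (and hence an exact dimension formula, as in the base case $t=1$) one would need to control the pairwise intersections of the column spaces of $B_i$ and $B_j$, or equivalently understand $B_i^\top B_j \pmod 2$ for distinct bundles — which is why the statement only claims $\ge$. I would therefore present the even case as a one-sided bound and remark that equality is expected but requires finer information about how disjoint projective bundles interact, which is not available from the results quoted so far.
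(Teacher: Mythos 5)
Your proof follows essentially the same route as the paper's: in the odd case the all-one vector in the left kernel pins down $\rank(H_{q,t})=q^2+q$, and in the even case the shared all-one vector in the column spaces of $A,B_1,\dots,B_t$ (each of rank $\rank(A)$) gives $\rank(H_{q,t})\le(t+1)\rank(A)-t$, exactly as in the paper, which likewise only claims the one-sided bound. One small slip: $\rank(A)$ is not $2^{2h}-3^h+2^h$ — that quantity is $\dim C_2(\Uppi)^\perp=q^2+q+1-\rank(A)$, so in fact $\rank(A)=3^h+1$ — but since your final computation uses the correct relation $q^2+q+1-\rank(A)=2^{2h}-3^h+2^h$, the conclusion is unaffected.
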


\begin{proof}
The proof goes as for Proposition \ref{prop:dimension_basecase}.

 \noindent \underline{\textbf{Case I: $q$ odd}}. We know from Theorem \ref{thm:parameters_code_plane} that $\rank(A) = q^2+q$, which gives us the lower bound $\rank(H_{q,t}) \geq \rank(A) = q^2+q$.\\
    On the other hand, since $A$ and each matrix $B_i$ has the all one vector in its left kernel, we have that also $H_{q,t}$ has a nontrivial left kernel, and hance $\rank(H_{q,t})=q^2+q$, yielding 
    $$\dim\left(C_2(\Uppi\sqcup\Upgamma_1\sqcup\ldots\sqcup\Upgamma_t)^\perp\right) =(t+1)(q^2+q+1)-\rank(H_{q,t})= t(q^2+q+1)+1.$$
    
    \noindent \underline{\textbf{Case I: $q$ even}}. Let us write $q=2^h$. In this case, we have that the all one vector belongs to the column spaces of each matrix $B_i$. Therefore, 
    $$\rank(H_{q,t})\leq \rank(A)+\sum_{i=1}^t\rank(B_i)-t.$$ 
     Thus, we obtain
    \begin{align*}\dim\left(C_2(\Uppi\sqcup\Upgamma_1\sqcup\ldots\sqcup\Upgamma_t)^\perp\right)&=(t+1)(q^2+q+1)-\rank(H_{q,t}) \\ & \geq (t+1)(q^2+q+1)-\rank(A)-\sum_{i=1}^t\rank(B_i)+t \\
    &=(t+1)\dim(C_2(\Uppi)^\perp)+t\\ &=(t+1)(2^{2h}-3^h+2^h)+t ,\end{align*}
    where the last equality comes from Theorem \ref{thm:parameters_code_plane}.
\end{proof}
Also in this general case we can study the minimum distance of the code $C_2(\Uppi\sqcup\Upgamma_1\sqcup\ldots\sqcup\Upgamma_t)^\perp$, generalizing the result on the minimum distance obtained when $t=1$ in Theorem \ref{thm:minimumdistance_codeebundle}. However, this time we are only able to give a lower bound.

\begin{proposition}\label{prop:minimumdistance_general}
    The minimum distance of $C_2(\Uppi\sqcup\Upgamma_1\sqcup\ldots\sqcup\Upgamma_t)^\perp$ is at least $\big\lceil\frac{q+2}{2}\big\rceil$.
\end{proposition}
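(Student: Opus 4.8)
The plan is to mimic the double-counting argument from the proof of Theorem~\ref{thm:minimumdistance_codeebundle}, but now allowing the supporting elements to come from several projective bundles, and to see how much the bound degrades. Take a minimum weight codeword of $C_2(\Uppi\sqcup\Upgamma_1\sqcup\ldots\sqcup\Upgamma_t)^\perp$ and consider its support: a set of $r$ lines together with $s_1,\ldots,s_t$ conics, one from each bundle $\Upgamma_i$, such that every point of $\PG(2,q)$ is covered an even number of times. Write $s=s_1+\cdots+s_t$ and $N=r+s$. As before, let $a_i$ be the number of points covered exactly $i$ times, so $a_i=0$ for $i$ odd, $\sum_i a_i=q^2+q+1$, and $\sum_i i a_i = N(q+1)$.

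The essential change is in the third counting identity. When I count incident triples $(P,E_1,E_2)$ with $E_1\neq E_2$ passing through $P$, a pair of elements intersects in at most one point if they are two lines, or two conics from the \emph{same} bundle; but a line and a conic, or two conics from \emph{different} bundles, may meet in up to two points. (Two distinct conics in $\PG(2,q)$ meet in at most four points, but that situation does not arise here since each $s_i\le \ $ one conic per bundle contributes at most one element — wait, no: each $\Upgamma_i$ can contribute several conics, and two conics from different bundles meet in at most four points. I would therefore bound such a pair's contribution by $4$, or more carefully split the cross terms.) So the right-hand side of the third identity becomes $\sum_i i(i-1)a_i \le r(r-1)+\sum_i s_i(s_i-1) + 2\cdot 2\cdot(\text{pairs sharing at most two points}) + 4\cdot(\text{pairs of conics from different bundles})$, and I need to bound this crude upper bound by something like $c\,N^2$ for a suitable constant $c$. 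From $\sum_i i(i-2)a_i\ge -N$ (the left side has all terms $\ge -1$ only when $i=0$, contributing $0$, so actually $\sum i(i-2)a_i\ge 0$... careful: $i(i-2)\ge 0$ for even $i\ge 2$ and $=0$ for $i=0$, so the left side is $\ge 0$) combined with $\sum i(i-2) a_i = \sum i(i-1)a_i - \sum i a_i \le (\text{quadratic bound}) - N(q+1)$, I get an inequality forcing $N(q+1)\le (\text{quadratic in }N)$, hence $N\ge \tfrac{q+1}{c}$ for the appropriate constant; pinning down $c=2$ yields $N\ge\lceil\frac{q+2}{2}\rceil$.

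The main obstacle is getting the constant in the quadratic bound small enough to land exactly at $\lceil\frac{q+2}{2}\rceil$ rather than something weaker. The naive bound treating \emph{every} pair of distinct support elements as meeting in at most two points gives $\sum i(i-1)a_i \le 2\binom{N}{2}=N(N-1)$, hence $N(q+1)\le N(N-1)+2\sum a_i \cdot(\dots)$; tracking the $a_0$ term correctly (points covered zero times contribute nothing), one gets $N(q+1)\le N^2-N+ (\text{slack})$ and then $N\ge\frac{q+2}{2}$ up to the ceiling. The delicate point is justifying that "at most two" is legitimate for \emph{every} relevant pair — it is for line/line, conic/conic same bundle, and line/conic, but conics from different bundles are the issue. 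Here I would invoke the hypothesis that the $\Upgamma_i$ are bundles of \emph{conics} (explicitly assumed in Section~\ref{section:generalization}) together with the fact that, by Theorem~\ref{cor:tangencyprojplane}, tangency between a conic of $\Upgamma_i$ and a line is rigid; but controlling two conics from different bundles meeting in $3$ or $4$ points is the genuine difficulty, and likely the bound $\lceil\frac{q+2}{2}\rceil$ is exactly what survives after conceding those extra intersections. I would finish by remarking that, unlike the $t=1$ case, no exact characterization of the minimum weight supports is claimed, precisely because these cross-bundle intersections prevent the clean tangency analysis that drove the equality case in Theorem~\ref{thm:minimumdistance_codeebundle}.
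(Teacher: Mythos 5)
Your setup is the same as the paper's (double-count $(P)$, $(P,E_1)$, $(P,E_1,E_2)$ over the support, use $a_i=0$ for odd $i$ and $\sum_i i(i-2)a_i\ge 0$), and you correctly isolate the one genuinely new issue: pairs of conics from \emph{different} bundles may meet in up to four points. But you never resolve that issue, and the only concrete computation you offer does not. Pretending that every pair of support elements meets in at most two points is a false premise precisely for those cross-bundle pairs (and, as an aside, the ordered-triple count under that premise would be bounded by $2N(N-1)$, not $2\binom{N}{2}=N(N-1)$); conversely, the crude bound "every pair meets in at most four points" gives only $0\le 4N(N-1)-N(q+1)$, i.e.\ $N\ge\frac{q+5}{4}$, which is weaker than the claim. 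So the proposal as written has a genuine gap at exactly the step it flags as "the genuine difficulty."

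The paper closes this gap by keeping the cross terms separate rather than uniformizing them. Writing $S=\sum_i s_i$ and $N=r+S$, the third count is
\begin{align*}
\sum_{i} i(i-1)a_i \;\le\; r(r-1)+\sum_{i=1}^{t}s_i(s_i-1)+2r S+4\sum_{1\le i<j\le t}s_is_j ,
\end{align*}
using intersection number $1$ for line/line and for two conics of the same bundle, $2$ for line/conic, and $4$ (B\'ezout) only for conics from distinct bundles. Subtracting $\sum_i ia_i=N(q+1)$ and using $\sum_i s_i^2+2\sum_{i<j}s_is_j=S^2$ gives
\begin{align*}
0\;\le\;\sum_i i(i-2)a_i\;=\;N(N-q-2)+2\sum_{1\le i<j\le t}s_is_j\;\le\;N(N-q-2)+N^2\;=\;N(2N-q-2),
\end{align*}
since $2\sum_{i<j}s_is_j\le S^2\le N^2$. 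Hence $N\ge\frac{q+2}{2}$, i.e.\ $N\ge\big\lceil\frac{q+2}{2}\big\rceil$. So your intuition that "the extra intersections cost a factor of $2$" is right, but the proof requires this bookkeeping: the $4$-point intersections enter only through the term $2\sum_{i<j}s_is_j$, which is at most $N^2$ and therefore merely doubles the quadratic coefficient. Your closing remark that no equality characterization is attempted here is consistent with the paper.
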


\begin{proof}
The proof goes in a similar way as the one of Theorem \ref{thm:minimumdistance_codeebundle}. 
 Take a codeword of minimum weight in $C_2(\Uppi\sqcup\Upgamma_1\sqcup\ldots\sqcup\Upgamma_t)^\perp$ and consider its support. This is a set $L$ of $r$ lines  and a set $O_i$ of $s_i$ ovals for each $i \in\{1,\ldots,t\}$ such that every point in $\PG(2,q)$ is incident with an even number of these elements. We will show that $r+s_1+\ldots+s_t \geq \frac{q+2}{2}$.
 
    Let $a_i$, $0 \leq i \leq 2q+2$, be the number of points that are covered $i$ times, then we can double count the tuples $(P), (P,E_1), (P,E_1,E_2)$, where $P$ is a point and $E_1,E_2 \in L \cup O_1\cup\ldots\cup O_t$ are lines or ovals incident with this point. Remark that by assumption $a_i = 0$ whenever $i$ is odd. We find the following three expressions:
    
    \begin{align}
        \sum_{i=0}^{2q+2} a_i &= q^2+q+1  \\
        \sum_{i=0}^{2q+2} ia_i &= \bigg(r+\sum_{i=1}^ts_i\bigg)(q+1) \label{eq:general2}\\
        \sum_{i=0}^{2q+2} i(i-1)a_i &\leq r(r-1)+\sum_{i=1}^ts_i(s_i-1)+2r\bigg(\sum_{i=1}^ts_i\bigg)+4\bigg(\sum_{1\leq i<j\leq t}s_is_j\bigg), \label{eq:general3}
    \end{align}
    as two conics intersect in at most $4$ points by B\'ezout's theorem.
    Subtracting \eqref{eq:general2} from \eqref{eq:general3} we obtain
    $$0 \leq \sum_{i=0}^{2q+2} i(i-2)a_i = \bigg(r+\sum_{i=1}^ts_i\bigg)\bigg(r-q-2+\sum_{i=1}^ts_i\bigg)+2\bigg(\sum_{1\leq i<j\leq t}s_is_j\bigg).$$
    One can easily check that this  last quantity is in turn at most
    $$ \bigg(r+\sum_{i=1}^ts_i\bigg)\bigg(2r-q-2+2\sum_{i=1}^ts_i\bigg),$$
    which then implies
    $$ r+\sum_{i=1}^ts_i \geq \frac{q+2}{2}.$$
\end{proof}

As a direct consequence of Proposition \ref{prop:minimumdistance_general} we have that in principle it should be possible to correct at least $\big\lfloor \frac{q}{4}  \big\rfloor$ errors in the code $C_2(\Uppi\sqcup\Upgamma_1\sqcup\ldots\sqcup\Upgamma_t)^\perp$ when $q$ is even, and at least $\big\lfloor \frac{q+1}{4}\big\rfloor$ when $q$ is odd. However, also in this case, when running one round of the bit-flipping algorithm on the matrix $H_{q,t}$ given in \eqref{eq:paritycheck_general}, we only correct a smaller fraction of them, as the following result shows.

\begin{proposition}\label{cor:bit-flipping_performance_general}
The intersection number of the matrix $H_{q,t}$ defined in \eqref{eq:paritycheck_general} is at most $4$. Thus,
  after performing one round of the bit-flipping algorithm on $H_{q,t}$ we can correct all the errors of weight at most  $\lfloor\frac{q+1}{8}\rfloor$ in the code $C_2(\Uppi\sqcup\Upgamma_1\sqcup\ldots\sqcup\Upgamma_t)^\perp$.
\end{proposition}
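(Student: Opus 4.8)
The plan is to mirror the proof of Proposition~\ref{bit_flipping_bound_mdpc}: first bound the maximum column intersection $s_{H_{q,t}}$ by a case analysis on the geometric meaning of pairs of columns, and then plug this bound into Theorem~\ref{Tillich_correction}. Recall that the columns of $H_{q,t}$ are indexed by the lines of $\Uppi$ (those coming from $A$) and by the conics of the bundles $\Upgamma_1,\dots,\Upgamma_t$ (those coming from $B_1,\dots,B_t$), and the support of such a column is exactly the set of points of $\PG(2,q)$ incident with the corresponding line or conic. Hence $s_{H_{q,t}}$ equals the largest number of common points shared by two distinct objects from this list.

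I would then split the analysis into the possible types of pairs. Two distinct lines of $\Uppi$ meet in exactly one point. A line and a conic meet in at most two points, since a conic is in particular an oval. Two distinct conics lying in the \emph{same} bundle $\Upgamma_i$ meet in exactly one point, by the defining property of a projective bundle. Finally, two conics lying in \emph{different} bundles $\Upgamma_i$, $\Upgamma_j$ with $i\neq j$ are distinct non-degenerate conics (they are distinct because the bundles are disjoint, so their defining equations are non-proportional), hence two irreducible plane curves of degree two with no common component, and B\'ezout's theorem gives them at most four common points over $\Fq$ --- exactly as argued in the proof of Proposition~\ref{prop:minimumdistance_general}. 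Taking the maximum of the values $1$, $2$, $1$, $4$ yields $s_{H_{q,t}}\le 4$.

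For the second assertion, observe that every column of $H_{q,t}$ has weight $v=q+1$ and every row has weight $(t+1)(q+1)=\mathcal O(\sqrt{n})$ with $n=(t+1)(q^2+q+1)$, so $C_2(\Uppi\sqcup\Upgamma_1\sqcup\ldots\sqcup\Upgamma_t)^\perp$ is an MDPC code of type $(q+1,(t+1)(q+1))$ and Theorem~\ref{Tillich_correction} applies. Since $s_{H_{q,t}}\le 4$, one round of the bit-flipping algorithm on $H_{q,t}$ corrects every error of weight at most $\big\lfloor\frac{v}{2 s_{H_{q,t}}}\big\rfloor \ge \big\lfloor\frac{q+1}{8}\big\rfloor$, which is the claimed bound.

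The argument is essentially routine; the only step that genuinely needs care is the last case of the column-intersection analysis, where one must be sure that two conics from different bundles really do behave like two distinct degree-two curves so that B\'ezout's bound of four is available. This is precisely the reason the generalized construction insists on bundles of \emph{conics} rather than arbitrary ovals: as noted in the text, for $q$ even two ovals can meet in as many as $q$ points, which would wreck the bound. One could additionally ask whether $s_{H_{q,t}}=4$ holds with equality; this is not needed for the statement, but it would follow by exhibiting two conics belonging to different bundles that meet in four $\Fq$-rational points.
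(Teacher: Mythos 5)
Your proposal is correct and follows essentially the same route as the paper's proof: the same case analysis on pairs of columns (two lines, line--conic, two conics in one bundle, two conics in different bundles via B\'ezout) giving $s_{H_{q,t}}\le 4$, followed by a direct application of Theorem~\ref{Tillich_correction}. Your added care about why the B\'ezout bound applies to conics from distinct bundles is a slightly more explicit version of what the paper delegates to the proof of Proposition~\ref{prop:minimumdistance_general}.
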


\begin{proof}
  The maximum column intersection is given by the maximum number of points lying in the intersection of elements in $\Uppi\sqcup\Upgamma_1\sqcup\ldots\sqcup\Upgamma_t$. Each pair of lines intersects in exactly a point, and the same holds for every pair of conics belonging to the same projective bundle, since each projective bundle is itself (ismorphic to) a projective plane. Moreover, every line intersects a conic in at most two points, and we have already seen that each pair of conics meets in at most $4$ points. Hence, the maximum column intersection of $H_{q,t}$ is at most $4$. The second part of the statement directly follows from Theorem \ref{Tillich_correction}.
\end{proof}

\begin{remark}
 At this point it is natural to ask whether it is possible to construct disjoint projective bundles, and -- if so -- how many of them we can have. It is shown in \cite[Theorem 2.2]{Baker} that one can always find $(q-1)$ disjoint projective bundles when $q$ is even, and $\frac{q^2(q-1)}{2}$ of them when $q$ is odd. We want to remark that this is not a restriction, since we still want that our codes
 $ C_2(\Uppi\sqcup\Upgamma_1\sqcup\ldots\sqcup\Upgamma_t)^\perp$ (together with the parity-check matrices $H_{q,t}$ of the form \eqref{eq:paritycheck_general}) give rise to  a family of MDPC codes. Thus, we are typically  interested in family of codes  where $t$ is a constant and does not grow with $q$.
\end{remark}

\begin{remark}
This construction provides a better performance of (one round of) the bit-flipping algorithm compared to the one run on random constructions of MDPC codes explained in \cite{Tillich}. As already explained in Remark \ref{rem:sh},  the random construction of MDPC codes provides in average MDPC codes whose maximum column intersection is $\mathcal O(\frac{\log n}{\log \log n})$, and thus one round of bit-flipping algorithm corrects errors of weight at most $\mathcal O(\frac{\sqrt{n}\log\log n}{\log n})$ in these random codes. Hence, also the generalized constructions of codes from projective bundles have asymptotically better performance in terms of the bit-flipping algorithm.
\end{remark}

\section{Conclusion}
In this paper we proposed a new construction of a family of moderate density parity-check codes arising from geometric objects. Starting from a Desarguesian projective plane $\Uppi$ of order $q$ and a projective bundle $\Upgamma$ in $\Uppi$, we  constructed a binary linear code whose parity-check matrix is obtained by concatenating the incidence matrices of $\Uppi$ and $\Upgamma$. We observed that we can construct these two matrices  taking the circular shifts of two perfect difference sets modulo $(q^2+q+1)$, providing a natural structure as a quasi-cyclic code of index $2$. Hence, the storage complexity is linear in the length and the encoding can be achieved in linear time using  linear feedback shift registers. Furthermore, the underlying geometry  of $\Upgamma$ and $\Uppi$ allowed us to study the metric properties of the corresponding code, and we could determine its exact  dimension and  minimum distance. We then analyzed the performance of the bit-flipping algorithm showing that it outperforms asymptotically the one of the random construction of codes obtained  in \cite{Tillich}. 
We then generalized the construction of this family of codes by concatenating the incidence matrices of several disjoint projective bundles living in the Desarguesian projective plane $\Uppi$. In this case we were still able to provide lower bounds on the parameters of the obtained codes exploiting their geometric properties. Nevertheless, we could still show that one round of the bit-flipping algorithm has the best asymptotic performance in terms of error-correction capability for the given parameters of the defining parity-check matrix.

\section*{Acknowledgment}

The work of A. Neri was supported by the Swiss National Science Foundation through grant no.~187711.
The work of J. Rosenthal was supported by the Swiss National Science Foundation through grant no.~188430.







%
%
\bibliographystyle{abbrv}
\bibliography{biblio}
\end{document}